\documentclass[a4paper,UKenglish,cleveref, autoref, thm-restate]{lipics_arxiv}
\hideLIPIcs  
\nolinenumbers 

\author{Alexandra Weinberger}{FernUniversit\"at in Hagen, Germany}{alexandra.weinberger@fernuni-hagen.de}{https://orcid.org/0000-0001-8553-6661}{}

\author{Ji Zeng}{Alfréd Rényi Institute of Mathematics, Hungary}{zeng.ji@renyi.hu}{https://orcid.org/0009-0002-7240-4832}{Supported by ERC Advanced Grants ``GeoScape'', no. 882971 and ``ERMiD'', no. 101054936, and also partly supported by NSF grant DMS-1928930 while in residence at the Simons--Laufer Mathematical Sciences Institute during the Spring 2025 semester.}

\authorrunning{A. Weinberger and J. Zeng} 

\Copyright{Alexandra Weinberger and Ji Zeng} 

\title{What induces plane structures in complete graph drawings?}

\acknowledgements{We wish to thank Jeck Lim, Jonathan Rollin, and André Schulz for helpful discussions, and an anonymous reviewer who provided the idea to simplify Lemma~\ref{lem:samecell}.}

\begin{document}

\maketitle

\begin{abstract}
This paper considers the task of connecting points on a piece of paper by drawing a curve between each pair of them. Under mild assumptions, we prove that many pairwise disjoint curves are unavoidable if either of the following rules is obeyed: any two adjacent curves do not cross, or any two non-adjacent curves cross at most once. Here, two curves are called adjacent if they share an endpoint.
On the other hand, we demonstrate how to draw all curves such that any two adjacent curves cross exactly once, any two non-adjacent curves cross at least once and at most twice, and thus no two curves are disjoint.
Furthermore, we analyze the emergence of disjoint curves without these mild assumptions, and characterize the plane structures in complete graph drawings guaranteed by each of the rules above.
\end{abstract}

\section{Introduction}\label{sec:introduction}
In the plane, a \textit{complete graph drawing} is a collection of points and curves connecting each pair of these points. The points are called vertices and the curves are called edges of this drawing. More precisely, an edge is the image of a piecewise smooth map from a closed interval on the real line to the plane which starts and ends at its corresponding pair of vertices. In this paper, we make \textit{mild assumptions} for complete graph drawings: we assume no edge contains vertices other than its endpoints; we assume no edge self-intersects, that is, its defining map must be injective; we assume edges cross properly whenever they intersect, in particular, edges do not merely touch. It is easy to imagine that a complete graph drawing can be tangled like a mass of spaghetti noodles. Our main result asserts that the breaking point of ``tangledness'' in complete graph drawings is the logical disjunction of the two rules below. Here, two edges are called adjacent if they share an endpoint.
\begin{equation*}
    \begin{aligned}
    &\text{Adjacent-simple: any two adjacent edges do not cross;}\\
    &\text{Separate-simple: any two non-adjacent edges cross at most once.}
\end{aligned}
\end{equation*}

\begin{theorem}\label{thm:main1}
    For any positive integer $m$, there exists $n$ such that every complete graph drawing on $n$ vertices that is either adjacent-simple or separate-simple must contain $m$ pairwise disjoint edges. On the other hand, there exist complete graph drawings on arbitrarily many vertices in which any two adjacent edges cross exactly once, and any two non-adjacent edges cross at least once and at most twice.
\end{theorem}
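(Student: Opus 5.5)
The theorem has three parts: the adjacent-simple case, the separate-simple case, and a construction. I would treat them in that order.

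\textbf{Adjacent-simple case.} The plan is to reduce to a known Ramsey-type structure.
\begin{itemize}
\item[(i)] By Ramsey's theorem applied to a suitable coloring of triples or $4$-tuples of vertices, extract a large subset $V'$ on which the induced drawing is ``uniform''.
\item[(ii)] Define the coloring by the isomorphism type of the drawing of the $K_4$ spanned by the $4$-tuple. Adjacent edges are disjoint, so in each $K_4$ only the three pairs of non-adjacent edges can interact.
\item[(iii)] In the uniform case, either each such pair is disjoint, and we are done immediately, or we argue by parity or by a rotation-system argument.
\end{itemize}
Where parity fails, I would use a different tool. Take a star $K_{1,k}$ at a vertex $v$. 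Because adjacent edges do not cross, the star is plane, and its edges split a neighborhood into cells. Repeated application of a cell-partition lemma (the acknowledged Lemma~\ref{lem:samecell}) should then show that many vertices lie in the same cell relative to several stars. This yields edges confined to disjoint regions, hence pairwise disjoint.

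\textbf{Separate-simple case.} I would follow the Pach--Solymosi--Tóth / Suk / Fox--Pach--Suk line of argument for simple drawings.
\begin{itemize}
\item[(i)] First, show that a large subset exists where no two edges of the same star cross arbitrarily often in a wild manner. Any structure that needs only non-adjacent edges to cross at most once should transfer directly.
\item[(ii)] For adjacent edges the crossings are unrestricted, so I would first apply Ramsey to the parity of the crossing number of each pair of adjacent edges. This makes the drawing behave like a simple drawing up to redrawing near vertices, by local re-routing that removes adjacent crossings of even parity.
\item[(iii)] Then invoke the known theorem that simple complete graph drawings contain $\Omega(\log^{c} n)$ pairwise disjoint edges. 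Equivalently, find a plane substructure such as a convex-position $K_m$ or a twisted $K_m$, both of which contain $\lfloor m/2\rfloor$ disjoint edges via a plane path or matching.
\end{itemize}
The main obstacle is step (ii), because re-routing can create new non-adjacent crossings. If that occurs, I would instead extract directly, by Ramsey, a uniform $K_m$ whose $K_4$'s are all of one type. I would then classify these types using the at-most-once condition, and show that each type forces a plane matching.

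\textbf{Construction.} I would place the $n$ vertices on a circle and draw each edge $v_iv_j$ as a curve that leaves $v_i$ in a direction determined by $i$, winds partly around a central point, and ends at $v_j$, so that the edge passes once around the center. The winding should be chosen so that:
\begin{itemize}
\item any two edges sharing a vertex cross exactly once, which requires controlling the angular offsets at the common endpoint;
\item any two non-adjacent edges cross once or twice, depending on whether their endpoints interleave.
\end{itemize}
The candidate model is an edge drawn as a chord plus a full loop around the origin, with the loop radius depending on the pair. Verifying the crossing counts is a finite case analysis over the cyclic order of the at most four endpoints together with the radii. I would tune the radii, for example radius increasing in $j-i$, until all cases give the required counts.
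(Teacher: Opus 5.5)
Your proposal has genuine gaps in both halves of the first part of the theorem; the construction is also only a sketch.

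\textbf{Adjacent-simple case.} Here the key input is missing. Ramsey on $4$-tuples only helps once you know that \emph{every} adjacent-simple drawing of $K_4$ contains a pair of disjoint edges. Without this, the monochromatic class could be ``all three independent pairs cross'', and nothing follows. There is also a smaller problem: ``isomorphism type'' is not a finite colouring, since non-adjacent edges may cross arbitrarily often. Instead, colour $v_i<v_j<v_k<v_\ell$ by which of the pairs $v_iv_j/v_kv_\ell$, $v_iv_k/v_jv_\ell$, $v_iv_\ell/v_jv_k$ is disjoint. A monochromatic set then gives $m$ disjoint edges $v_{2s-1}v_{2s}$, or $v_sv_{s+m}$, or $v_sv_{2m+1-s}$. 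Neither substitute you offer supplies the $K_4$ fact. A parity argument controls crossing numbers only modulo $2$, so it cannot distinguish zero crossings from two. The star argument is vacuous: a plane star is a tree, so its complement is connected and it has a single cell. Moreover, Lemma~\ref{lem:samecell} relies on separate-simplicity, which is unavailable here. The paper proves the $K_4$ fact (Proposition~\ref{prop:adjacent}) with a Jordan-curve argument. Suppose $ab$ crosses $cd$ and $ac$ crosses $bd$. Take the first crossing $p$ of $cd$ seen from $c$ (it lies on $ab$) and the first crossing $q$ of $bd$ seen from $b$ (it lies on $ac$). The closed curves $ac\cup cp\cup pa$ and $ab\cup bq\cup qa$ bound nested regions $R\subset S$. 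Every boundary piece lies on an edge adjacent to both $ad$ and $bc$. Hence $ad$ lies in $R$ or outside $S$, while $bc$ lies in $S\setminus R$.

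\textbf{Separate-simple case.} Your plan fails at step (ii), and the fallback does not repair it. Crossings between adjacent edges can be arbitrarily numerous and far from the shared vertex. The standard way to remove one is to swap the initial segments of $va$ and $vb$ up to the crossing. But then $va$ inherits the crossings of the initial part of $vb$ with non-adjacent edges. So a disjoint pair in the redrawn simple drawing need not be disjoint in the original, and the at-most-once condition can break. Results on simple drawings therefore do not transfer.

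The $K_4$-classification fallback fails because a separate-simple $K_4$ need not contain any disjoint pair. Take a convex $K_4$ on $a,b,c,d$. Redraw $ab$ to run from $a$ through the interior, cross $cd$ once, and return to $b$ around the outside past $c$. Redraw $ad$ the same way across $bc$. All new crossings are with adjacent edges, and now all three independent pairs cross exactly once.

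What is needed is a bounded configuration that forces one disjoint pair. The paper gets it from a parity lemma: if $d,e$ lie in one cell of a separate-simple triangle $abc$, a chessboard colouring of the cells shows $de$ crosses the triangle an even number of times, so it misses a side. This is combined with a cell count for two triangles $abc$ and $def$. There are at most $9$ crossings between them, hence at most $18$ non-covered cells and at most $18$ covering cells. This forces a disjoint pair among $43$ vertices. Ramsey on $43$-tuples, labelled by the positions of that pair, then gives $m$ disjoint edges.

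\textbf{Construction.} Your construction leaves the crossing counts to ``tuning''. An edge making a full turn around the origin must either self-intersect or spiral, and then it tends to cross other edges too often. The paper draws $ab$ as a circular arc through $a$, $b$ and a point at distance $r_i$ from the centre on the far side of the chord, with $r_i$ decreasing in chord length. Two circles meet at most twice, and adjacent arcs already share an endpoint, so the upper bounds are automatic. The ordering of the $r_i$ is what gives the lower bounds.
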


\begin{figure}
    \centering
    \includegraphics{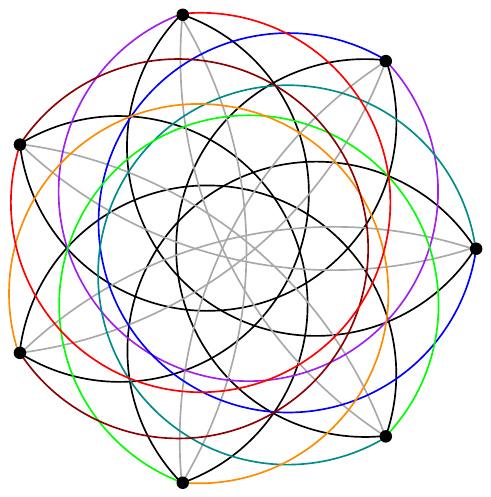}
    \caption{Any two adjacent edges cross exactly once, any two non-adjacent edges cross at least once and at most twice.}
    \label{fig:flower}
\end{figure}

In general, a \textit{graph} in the abstract sense consists of a vertex set and an edge set, where each edge is a vertex pair (unordered). A \textit{drawing} of a graph is a representation of its vertices as distinct points in the plane, and its edges as curves connecting the corresponding pair of vertices. We make the same mild assumptions for all drawings of graphs. A drawing is said to be \textit{plane} if it has no crossings. Planarity has been a main theme in the development of graph theory featured by famous theorems such as Euler's formula and the four color theorem. A drawing is said to be \textit{simple} if it is adjacent-simple and separate-simple simultaneously. Rafla~\cite{rafla1988} conjectured that every simple complete graph drawing on at least three vertices must contain a plane Hamiltonian cycle, that is, a cycle that visits every vertex exactly once. Motivated by this conjecture or not, unavoidable plane structures in simple drawings are extensively studied, see our incomplete list~\cite{aichholzer2022shooting,aichholzer2024twisted,fox2025structure,fulek14disjoint,fulek2013topological,garcia2021plane,pach2010disjoint,ruizvargas2017disjoint,suk2013disjoint,suk2025unavoidable,zeng2025note}. We emphasize the work~\cite{pach2010disjoint} of Pach and T\'oth, in which they proved simple complete graph drawings induce many pairwise disjoint edges and constructed complete graph drawings where any two edges cross at least once and at most twice. However, some adjacent edges cross twice in their construction. With respect to the conditions, Theorem~\ref{thm:main1} can be considered as an improvement of their result (albeit quantitatively they obtain more disjoint edges). Further work on unavoidable plane structures also addresses simple complete graph drawings that satisfy additional conditions~\cite{aichholzer2024separable,bergold2025plane,bergold2024holes}, and especially those where the edges have to be straight~\cite{pach1994some,sharir2013counting,toth2000note}. Drawings that are only adjacent-simple or only separate-simple, have been studied under the names semi-simple~\cite{balko2015semi} and star-simple~\cite{felsner2020maximum}, or tolerable~\cite{cairns2019bad}, respectively. This paper attempts to capture the earliest moment at which one can expect unavoidable plane structures.

We shall state the two cases of Theorem~\ref{thm:main1} as theorems on their own. We call a graph $S$ a \textit{plane structure guaranteed} in a collection $\mathcal{C}$ of graph drawings if there exists an integer $n$ such that any drawing in $\mathcal{C}$ of at least $n$ vertices must contain a plane sub-drawing of $S$. Here, a drawing $A$ is called a sub-drawing of another drawing $B$ if the vertex set and edge set of $A$ are subsets of those of $B$ respectively. A \textit{squid} is a connected graph that contains exactly one cycle which is a triangle, and all other vertices only form edges with two fixed vertices on this triangle. A \textit{caterpillar} is a connected graph without cycles that contains a central path, and all other vertices only form edges with vertices on this path. An \textit{isolated vertex} in a graph is a vertex that does not form any edges.
\begin{figure}[ht]
    \centering
    \includegraphics[page=1]{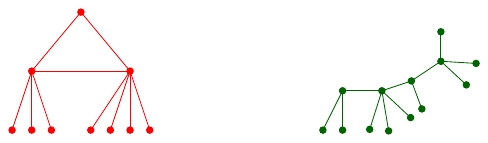}
    \caption{A squid (left) and a caterpillar (right).}
    \label{fig:catersquid}
\end{figure}

\begin{theorem}\label{thm:main2}
    The plane structures guaranteed in adjacent-simple complete graph drawings are  exactly squids with extra isolated vertices, and unions of disjoint caterpillars.
\end{theorem}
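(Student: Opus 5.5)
The proof has two directions: every squid (with extra isolated vertices) and every union of disjoint caterpillars is guaranteed, and nothing else is.

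\emph{Guaranteed structures.} We use Ramsey-type reductions to obtain a canonical sub-drawing. Fix a large adjacent-simple drawing of $K_n$. In an adjacent-simple drawing, the star at any vertex $v$ is plane, since no two of its edges cross. So the edges at $v$ leave $v$ in a well-defined cyclic order (its rotation). Whether two edges $uv$ and $wx$ cross depends only on their endpoints. By Ramsey's theorem on $4$-tuples, we pass to a large vertex subset $X=\{x_1<\dots<x_N\}$ on which the following are all homogeneous: the relative order of the rotations, and the crossing pattern of every pair of edges whose endpoints have a fixed order type. We record crossing parity, or whether the pair crosses at all.

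The goal is to show that, in the canonical configuration, either many vertices behave like a convex or twisted family, or one can find the following structure. Take two vertices $a,b$ whose edges to all other vertices of $X$ are pairwise non-crossing, so that the stars at $a$ and $b$ together with the edge $ab$ form a plane structure. This gives plane "double stars" from which squids and caterpillars can be carved. Concretely:
\begin{itemize}
\item Take $a=x_1$ and $b=x_2$. Homogeneity forces the edges $x_1x_i$ and $x_2x_j$ to either never cross or always cross, for each order type of $(1,2,i,j)$.
\item A plane $K_{2,t}$ plus the edge $ab$ contains the squid: a triangle $abx_i$ together with pendant edges to $a$ and $b$.
\item For caterpillars, we need plane paths of disjoint edges with legs. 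We build them greedily inside the homogeneous set using the plane stars, combining a plane matching (from the disjoint-edges part of Theorem~\ref{thm:main1}) with stars.
\end{itemize}

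The main obstacle is this case analysis. We must show that every homogeneous pattern consistent with adjacent-simplicity yields either the squid-with-isolated-vertices structure or an arbitrary union of disjoint caterpillars. I expect to treat the convex-like and twisted-like patterns separately, in analogy with the Pach--Solymosi--Tóth convex/twisted dichotomy.

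\emph{Nothing else is guaranteed.} We need explicit adjacent-simple drawings that exclude every other plane graph. One construction is twisted-like: all edges cross except those sharing a vertex and some "nearby" pairs. There every plane sub-drawing is a union of caterpillars, and in particular it contains no cycle. A second construction is convex-like with a modification that forbids plane cycles except one triangle. Here every plane connected component with a cycle is a squid, and it cannot coexist with any other nontrivial component, which forces the remaining vertices to be isolated. Taking both constructions together, a guaranteed plane structure $S$ must be a forest of caterpillars from the first, and from the second it must be a squid plus isolated vertices, or acyclic.

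This exactness direction is where we must exhibit the specific drawings and check carefully that they are adjacent-simple. The hard part is verifying that the first construction forbids non-caterpillar trees, for instance a spider with three legs of length $2$.
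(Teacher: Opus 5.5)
\textbf{There are genuine gaps in both directions of the proposal.}

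\textbf{Guaranteed direction.} Ramsey homogeneity by itself yields nothing here. A priori, the homogeneous pattern could have every pair of non-adjacent edges crossing, and then only stars and triangles are plane.

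The missing key idea is the geometric fact that every adjacent-simple drawing of $K_4$ contains a pair of disjoint edges (Proposition~\ref{prop:adjacent}). Suppose $ab$ meets $cd$ and $ac$ meets $bd$. Take the first crossing on $cd$ seen from $c$, and the first crossing on $bd$ seen from $b$. These bound Jordan regions $R\subset S$. Adjacent-simplicity then confines $ad$ to $R$ or to the outside of $S$, while $bc$ stays in $S\setminus R$.

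With this lemma, you colour each quadruple $v_i,v_j,v_k,v_\ell$ ($i<j<k<\ell$) by which of the pairs $v_iv_j/v_kv_\ell$, $v_iv_k/v_jv_\ell$, $v_iv_\ell/v_jv_k$ is disjoint. The question then becomes purely combinatorial, about \emph{three} vertex orderings, not the two suggested by your convex/twisted analogy. All three orderings are realized by adjacent-simple drawings (Proposition~\ref{prop:type}).

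Your double-star step cannot work. Pairwise non-crossing stars at $a$ and $b$ form a plane $K_{2,t}$, which contains $C_4$. But in the ordering where only $v_iv_j,v_kv_\ell$ with $i<j<k<\ell$ are disjoint (Type~I), no $C_4$ is plane. So such $a,b$ need not exist. Squids and caterpillar unions must instead be placed by explicit vertex sequences in each ordering, e.g.\ $u_1,\dots,u_s,a,b,c,w_1,\dots,w_t$ for the squid in Type~I.

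Invoking the disjoint-edges part of Theorem~\ref{thm:main1} is also circular, since its adjacent-simple half is deduced from the present theorem. In any case, a plane matching together with plane stars need not be mutually disjoint.

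\textbf{Exactness direction.} Your first construction cannot exist as described. The three edges of any triangle are pairwise adjacent, so in every adjacent-simple drawing every triangle is a plane cycle. And if plane cycles could be excluded, squids would not be guaranteed, contradicting the first half of the theorem.

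The correct exclusions come from the three type-drawings:
\begin{itemize}
\item In Type~I, a cycle $u_1\cdots u_s$ with $s\ge 4$ forces each $u_{i+1}$ to appear after $u_i$, which fails once you go around the cycle.
\item In Type~I, a spider with three legs of length two has two neighbors $b,d$ of its center $a$ on the same side of $a$, with $b$ between $a$ and $d$. Then $bc$ meets $ad$.
\item In Type~I, if all three vertices of a triangle have outside neighbors, the middle one, say $b$, has an edge $bx$ that meets $ac$.
\item In Type~II (only $v_iv_k,v_jv_\ell$ disjoint), an edge avoiding a triangle $a<b<c$ needs an endpoint between $a$ and $b$ and another between $b$ and $c$, so it meets $ac$.
\end{itemize}
Type~II alone still admits pendant edges at all three triangle vertices, so combining two orderings is essential for pinning down squids.

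Your proposal leaves the drawings and all of these verifications unspecified. The one concrete property it assigns to its first drawing, that it has no plane cycle, is false.
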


\begin{theorem}\label{thm:main3}
    The plane structures guaranteed in separate-simple complete graph drawings are  exactly unions of disjoint edges and isolated vertices.
\end{theorem}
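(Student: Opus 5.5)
Theorem~\ref{thm:main3} has two directions. First, for every $m$, a large enough separate-simple complete graph drawing contains $m$ pairwise disjoint edges; this gives the positive part, since a plane matching together with leftover vertices realizes any union of disjoint edges and isolated vertices. Second, we need a construction showing that no other graph is guaranteed.

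\textbf{Upper bound.} Any guaranteed structure with a vertex of degree at least two contains a path with two edges. So it suffices to exhibit separate-simple complete graph drawings on arbitrarily many vertices in which \emph{every} pair of adjacent edges crosses. Then no plane sub-drawing contains a path of length two, and every plane sub-drawing is a matching plus isolated vertices.

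The drawing described in Theorem~\ref{thm:main1} (Figure~\ref{fig:flower}) does not work as is: there, non-adjacent edges may cross twice, which violates separate-simplicity. I would instead modify a simple drawing, for instance the convex straight-line drawing of $K_n$. Near each vertex $v$, I would reroute the ends of its incident edges so that any two of them cross exactly once. Concretely, inside a small disk around $v$, the rays leaving $v$ get twisted so that they pairwise intersect once; a standard ``pseudo-line arrangement'' gadget achieves this. Because the modification is local to disks around vertices, which only edges incident to that vertex enter, no crossing between non-adjacent edges is created. Non-adjacent edges therefore still cross at most once, while adjacent edges now cross. One must check the mild assumptions: the crossings are proper, and no edge self-intersects. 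The latter holds because each edge is twisted only near its own endpoints, in disjoint disks.

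\textbf{Lower bound.} This is the main obstacle: showing that separate-simple drawings force $m$ disjoint edges despite arbitrary crossings among adjacent edges. My plan is a Ramsey-type reduction.
\begin{enumerate}
\item Take a large vertex set and consider the star at a vertex $v$. Adjacent edges may cross arbitrarily often, but a crossing between adjacent edges can be removed by a local redrawing (swapping segments) that only creates or removes crossings among edges sharing that vertex. This should leave the separate-simple property, and the disjointness of non-adjacent pairs, unchanged. Iterating this argument with care should reduce the problem to simple drawings.
\item Then apply the known result of Pach and T\'oth~\cite{pach2010disjoint} that simple complete graph drawings contain many pairwise disjoint edges.
\end{enumerate}
The delicate point is that removing a crossing between adjacent edges must not alter crossings between non-adjacent edges. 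If the swap argument fails, I would instead use Ramsey's theorem on $4$-tuples, classifying them by the crossing pattern of their three perfect matchings. Every $K_4$ in a separate-simple drawing has a non-crossing pair of disjoint edges; since non-adjacent edges cross at most once, one cannot have all three disjoint pairs crossing. I would try to derive this from a parity or rotation argument, and then extend the resulting homogeneous structure to $m$ disjoint edges via an Erd\H{o}s--Szekeres-type argument.
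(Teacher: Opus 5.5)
Your construction for the second direction is correct and is the paper's own: twisting the initial parts of the edges at each vertex of a simple drawing makes all adjacent pairs cross and leaves independent crossings unchanged. The gap is in the lower bound. Both of your routes rely on something false: that a separate-simple drawing behaves like a simple one on small vertex sets.

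\textbf{A counterexample on four vertices.} Here is a separate-simple drawing of $K_4$ with no two disjoint edges.
Start from the convex straight-line drawing on $a,b,c,d$ in cyclic order, where only $ac$ and $bd$ cross.
First, push a thin finger of $ab$ through the outer face and around $c$. This adds one crossing with each of $ca,cb,cd$, and only the one with $cd$ is between independent edges.
Then push a thin finger of $ad$ through the interior of the quadrilateral around $b$; on the way it crosses only $ac$, twice. This adds one crossing with each of $ba,bc,bd$, and only $bc$ is independent of $ad$.
Now every pair of independent edges crosses exactly once.

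\textbf{Why both routes fail.} The example refutes your fallback claim that every separate-simple $K_4$ has a disjoint pair. The remark that non-adjacent edges cross at most once does not imply it; that four-vertex statement holds for adjacent-simple drawings (Proposition~\ref{prop:adjacent}), not here. The example also rules out your first route entirely. Every simple drawing of $K_4$ has at most one crossing, hence at least two disjoint pairs. So no redrawing into a simple drawing can avoid creating disjoint pairs that were not disjoint before, and Pach--T\'oth then says nothing about the original drawing. Concretely, the swap at a crossing $p$ of $va$ and $vb$ exchanges their segments from $v$ to $p$. Any crossing an independent edge $cd$ had with one of these segments is transferred to the other edge. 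This can give $cd$ two crossings with the new $va$, or make $cd$ disjoint from it.

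\textbf{What is missing.} You need a base case: a fixed $r$ such that every separate-simple drawing of $K_r$ has two disjoint edges. This is the heart of the paper's proof.
\begin{enumerate}
\item The paper proves a parity lemma (Lemma~\ref{lem:samecell}). Chessboard-color the cells of a possibly self-crossing triangle $abc$. An edge $de$ with $d,e$ in one cell crosses the triangle an even number of times, so by separate-simplicity it misses one of $ab,bc,ac$.
\item It then takes two triangles $abc$, $def$ and $37$ further vertices. One is done unless the further vertices lie in distinct cells of $abc$.
\item The two triangles cross each other at most $9$ times. Hence at most $18$ cells of $abc$ are not contained in a cell of $def$, and at most $18$ cells of $def$ contain a cell of $abc$.
\item Pigeonhole then puts two further vertices in a common cell of $def$, which gives $r=43$ (Proposition~\ref{prop:separate}).
\end{enumerate}
Only then does the Ramsey step you sketch go through: color $r$-tuples by the index pattern $(i,j,k,\ell)$ of a disjoint pair, and choose well-spaced vertices in a homogeneous set to obtain $m$ pairwise disjoint edges. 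With your base case $r=4$, that step has nothing to stand on.
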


One may ask to what extent our main result still holds if the mild assumptions are dropped. For example, when we connect points on a piece of paper, the curves drawn by a pen could touch or run through any points, then what would be the earliest cause of disjointness in this scenario? To model an edge in a drawing without mild assumptions, we use the image $\text{Im}(f)$ of a function $f$ from the unit interval $[0,1]$ to the plane $\mathbb{R}^2$. We also define self-intersection of $f$ as $\Delta(f)=\{p: p=f(x)=f(y)~\text{for $x\neq y$}\}$. A family of functions $\{f_i\}$ is called \textit{stroke-like} if all the pre-images $f_i^{-1}(\text{Im}(f_j))$ and $f_i^{-1}(\Delta(f_i))$ are unions of finitely many sub-intervals of $[0,1]$. Although mathematically two curves can intersect infinitely many times without coincide along any segment, it is not physically possible to draw non-stroke-like curves on a piece of paper using a pen. We can conclude the following corollary of Theorem~\ref{thm:main1}.
\begin{corollary}\label{cor:main0}
    For any positive integer $m$, there exists $n$ such that the following is true. Let $p_1,p_2,\dots,p_n$ be distinct points in the plane, $c_{ij}: [0,1] \to \mathbb{R}^2$ be piecewise smooth function with $c_{ij}(0)=p_i$ and $c_{ij}(1)=p_j$ for all $1\leq i<j \leq n$, and suppose the family $\{c_{ij}\}$ is stroke-like. If either of the following conditions holds: \begin{enumerate}
        \item[(A)] $\forall$ $i<j$ and $k<\ell$ with $|\{i,j,k,\ell\}| = 3$, we have $|c_{ij}^{-1}(\text{Im}(c_{k\ell}))| = |c_{k\ell}^{-1}(\text{Im}(c_{ij}))| = 1$;
        \item[(S)] $\forall$ $i<j$ and $k<\ell$ with $|\{i,j,k,\ell\}| = 4$, we have $|c_{ij}^{-1}(\text{Im}(c_{k\ell}))| = |c_{k\ell}^{-1}(\text{Im}(c_{ij}))| \leq 1$, and there is no touching point between $c_{ij}$ and $c_{k\ell}$;
    \end{enumerate} then there exist $m$ functions in $\{c_{ij}:~1\leq i<j \leq n\}$ whose images are pairwise disjoint.
\end{corollary}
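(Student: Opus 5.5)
The plan is to reduce Corollary~\ref{cor:main0} to Theorem~\ref{thm:main1}. From the stroke-like family $\{c_{ij}\}$ we build a complete graph drawing satisfying the mild assumptions. It should obey the same crossing rule, adjacent-simple in case (A) and separate-simple in case (S), and two of its edges should be disjoint only when the corresponding images $\text{Im}(c_{ij})$ are disjoint. The theorem then yields $m$ pairwise disjoint edges, hence $m$ pairwise disjoint images.

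\textbf{Step 1: local modifications.} Fix an edge $c_{ij}$. By stroke-likeness, $c_{ij}^{-1}(\Delta(c_{ij}))$ and each $c_{ij}^{-1}(\text{Im}(c_{k\ell}))$ are finite unions of intervals. Removing self-loops then gives an injective curve $e_{ij}\subseteq \text{Im}(c_{ij})$ from $p_i$ to $p_j$: take the last time the curve is at a repeated point and shortcut, and repeat finitely often.

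The problem is that a subcurve may lose intersections. For disjointness this is harmless, since if $e_{ij}$ meets $e_{k\ell}$ then the images meet. The direction we actually need is that the new drawing has \emph{few} crossings, and shrinking to subcurves only removes intersections.

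There is one caveat: in case (A), passing through other vertices and traveling along common segments must be eliminated. Condition (A) says the preimage of $\text{Im}(c_{k\ell})$ in $c_{ij}$ has exactly one point. For adjacent edges this point is the shared endpoint, so adjacent edges meet only at their common vertex. In case (S), non-adjacent edges meet in at most one point, which is a genuine crossing and not a touching.

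\textbf{Step 2: vertices on edges and overlaps.} An edge may pass through a vertex $p_k$. In case (A), $c_{ij}$ meets $c_{ik}$ only at $p_i$, so $c_{ij}$ cannot contain $p_k$. In case (S) it can, so we reroute $e_{ij}$ in a tiny disk around $p_k$, avoiding the vertex. Choosing the detour side appropriately should let it cross each edge at $p_k$ at most once. Those edges are adjacent to $e_{ij}$ only when $k\in\{i,j\}$, which is impossible, so all of them are non-adjacent to $e_{ij}$.

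This is the main obstacle. The detour might create two crossings with some $e_{k\ell}$ that already crosses $e_{ij}$ elsewhere, violating separate-simplicity. My first attempt will use the fact that if $c_{ij}$ passes through $p_k$, then $c_{ij}$ already meets every edge $c_{k\ell}$ at $p_k$. By (S) that is then their unique common point, so each such edge gets exactly one crossing from the detour and none elsewhere. Adjacent edges are unconstrained in case (S), so they cause no trouble.

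Finally, we perturb to make all intersections proper and finite. Touchings between adjacent edges in case (S) can be pushed apart, and the remaining finitely many tangencies can be removed by small local perturbations. Since perturbation happens near intersection points, it never creates disjointness between edges that did not already intersect in the original images. We then apply Theorem~\ref{thm:main1}.
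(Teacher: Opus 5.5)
There is a genuine gap: the invariant your reduction needs runs the opposite way from the one you verify. For Theorem~\ref{thm:main1} to give the corollary, every pair of edges whose original images intersect must still intersect after your modifications. Otherwise the theorem may return edges that are disjoint in the new drawing but not in the original. Your remarks ``if $e_{ij}$ meets $e_{k\ell}$ then the images meet'' and ``it never creates disjointness between edges that did not already intersect'' are true, but they only prove the converse. Concretely, shortcutting a self-loop deletes part of $\text{Im}(c_{ij})$, and that loop may be the only place where $c_{ij}$ meets some non-adjacent $c_{k\ell}$; neither (A) nor (S) prevents this. Likewise, in case (A) touchings occur only between non-adjacent edges (self-touchings aside), and separating them by a small perturbation destroys their intersection whenever the touching was their only common point. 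The paper never deletes part of an image:
\begin{itemize}
\item at a touching point it pushes a non-sandwiched segment \emph{through} the segments touching it; the extra crossings are harmless because touchings occur only between non-adjacent edges under (A) and only between adjacent ones under (S);
\item it unpacks coincident segments, which you do not treat at all, into parallel strands that are twisted so they still meet;
\item it removes self-crossings by a local redraw near each crossing point, so the loop stays part of the edge.
\end{itemize}

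Your Step~2 for case (S) fails for a structural reason. Take a small disk $D$ around $p_k$. The arc of $c_{ij}$ through $p_k$ splits the edges leaving $p_k$ into two sides. Close any detour by either boundary arc of $D$; the two closed curves differ by $\partial D$, which winds once around $p_k$. Hence the detour crosses the edges on one side an odd number of times and those on the other side an even number of times. By (S), $p_k$ is the unique common point of $c_{ij}$ and $c_{k\ell}$, so an even count is either $0$, a new disjoint pair, or at least $2$, which violates separate-simplicity whenever $\ell\notin\{i,j\}$. Note also that $c_{ki}$ and $c_{kj}$ are adjacent to $c_{ij}$, contrary to your remark. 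The paper avoids rerouting altogether:
\begin{itemize}
\item if $c_{ij}$ passed through both $p_k$ and $p_\ell$, it would meet $c_{k\ell}$ twice, so each edge passes through at most one other vertex;
\item hence the triples $\{i,j,k\}$ with $p_k$ in the interior of $c_{ij}$ form a $3$-uniform hypergraph with $O(n^2)$ hyperedges, which has an independent set whose size tends to infinity with $n$;
\item it restricts to such a vertex set, after locally perturbing away passes of an edge through its own endpoint in a way that keeps adjacent edges intersecting.
\end{itemize}
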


We will discuss the necessity of Conditions (A) and (S) towards the end of this paper.

\section{The adjacent-simple case}\label{sec:adjacent}

We shall prove Theorem~\ref{thm:main2} in this section. The starting point is the observation below.
\begin{proposition}\label{prop:adjacent}
    There exists at least one pair of disjoint edges in every adjacent-simple complete graph drawing on four vertices.
\end{proposition}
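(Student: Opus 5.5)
We argue by contradiction, using the Jordan curve theorem on the triangles of the $K_4$. Let the vertices be $a,b,c,d$. The three pairs of non-adjacent edges are $\{ab,cd\}$, $\{ac,bd\}$ and $\{ad,bc\}$, and we assume that each pair crosses at least once. By adjacent-simplicity, each edge can cross only its unique non-adjacent partner. Hence every triangle, say $abc$, is a simple closed curve $T$. Its three edges pairwise share endpoints, so they do not cross, and since no edge passes through a vertex they also do not touch. Moreover, an edge of the star at $d$ meets $T$ only at interior points of the edge of $T$ opposite to it: $da$ meets $T$ only on $bc$, $db$ only on $ac$, and $dc$ only on $ab$. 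All these intersections are proper crossings. By stereographic projection we may move the point at infinity, so we may assume $d$ lies in the bounded face of $T$.

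First I would record parity information. For each star edge, say $da$, follow it from $d$ to $a$. It starts inside $T$, so the side of $T$ from which it approaches $a$ is inside if $|da\cap bc|$ is even and outside if it is odd. The same holds for $db$ and $dc$.

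Next I would localize one crossing to build a smaller closed curve. Let $x$ be the first crossing of $da$ with $bc$ when walking from $d$. Let $\gamma$ be the closed curve formed by the sub-arc $d\to x$ of $da$, the sub-arc $x\to b$ of $bc$, and the edge $bd$. These three pieces meet only at $d$, $x$ and $b$. Indeed, $da$ does not cross $db$, and the first sub-arc of $da$ meets $bc$ only at $x$. Also $db$ does not cross $bc$, since they are adjacent. So $\gamma$ is a simple closed curve. The key point is that the edge $ac$ can cross $\gamma$ only on $bd$, because $ac$ is adjacent to $ad$ and to $bc$. Similarly, $cd$ can cross $\gamma$ only on the sub-arc of $bc$, but $cd$ is adjacent to $bc$, so $cd$ does not cross $\gamma$ except possibly at the endpoint $d$. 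Since $c$ and the edge $cd$ lie off $\gamma$ except at $d$, the vertex $c$ lies on a determined side of $\gamma$. Following $cd$ and using the local picture at $d$, that is, the rotation of $da,db,dc$ around $d$, fixes this side.

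Then $c$ is joined to $a$ by $ac$. The point $a$ is off $\gamma$, because $x\neq a$ and $a\notin bd$. So the parity of $|ac\cap bd|$ is determined by whether $a$ and $c$ lie on the same side of $\gamma$. Combining this with the analogous statement obtained by swapping the roles of $b$ and $c$, and with the parity observations from the first step, I would aim to force one of the crossing numbers to be even and arranged so that the relevant edge cannot actually reach its partner. For example, I would try to show that $dc$ must stay inside the region bounded by $\gamma$ together with part of $T$ that $ab$ never enters, contradicting $ab\cap cd\neq\emptyset$.

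The main obstacle is this final case analysis. Parity alone does not rule out crossing numbers such as $2$, so the contradiction must come from an actual separation argument: some edge is trapped in a face of the plane drawing formed by $T$ together with the star arcs, and that face is avoided by its partner. I would organise this by the cyclic rotation of $da,db,dc$ at $d$, which gives two cases up to reflection. In each case I would choose the minimal first-crossing sub-arcs so that the resulting region is bounded only by pieces the trapped edge is not allowed to cross.
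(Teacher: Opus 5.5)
\textbf{There is a genuine gap: the proof is never completed.} Your preliminary observations are correct. $T$ is a simple closed curve. So is $\gamma$, formed by the sub-arc of $da$ from $d$ to its first crossing $x$ with $bc$, the sub-arc of $bc$ from $x$ to $b$, and the edge $bd$, and $cd$ meets $\gamma$ only at $d$.

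The proposal stops exactly where the proof has to happen. Your last two paragraphs only describe what you would aim or try to show. The tools you name there (parities, the rotation at $d$, a region bounded by $\gamma$ and part of $T$) are never turned into an argument, and, as you note yourself, parity cannot rule out $|ac\cap bd|=2$.

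Your $\gamma$ does settle half of the problem. Since $ab$ is adjacent to each of $da$, $bc$, $bd$, it meets $\gamma$ only at $b$. So if $a$ and $c$ lie on opposite sides of $\gamma$, then $\gamma$ separates $ab$ from $cd$ and you are done. The remaining case is when $a$ and $c$ lie on the same side, so $|ac\cap bd|$ is even and, under your hypothesis, at least $2$. There $\gamma$ alone carries no information, and your proposal does not resolve it.

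The missing idea is a second first-crossing curve coming from a different crossing pair. Let $y$ be the first crossing of $ca$ with $bd$ as seen from $c$. Let $\gamma'$ consist of $bc$, the sub-arc of $ca$ from $c$ to $y$, and the sub-arc of $bd$ from $y$ to $b$.

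\begin{itemize}
\item One checks that $\gamma\cup\gamma'$ is a theta graph: three internally disjoint paths from $x$ to $y$, passing through $b$, $c$, and $d$ respectively.
\item It consists only of pieces of $da$, $bc$, $bd$, $ca$, and each of these edges is adjacent to both $ab$ and $cd$. So neither $ab$ nor $cd$ crosses it, and $a$ does not lie on it.
\item The interior of $cd$ joins an interior point of the $c$-path to an interior point of the $d$-path. Hence it lies in the face bounded by these two paths.
\item The edge $ab$ minus $b$ lies in one of the two faces incident to the $b$-path, which is a different face.
\end{itemize}

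Hence $ab\cap cd=\emptyset$, which is the desired contradiction. After relabeling, this is exactly the paper's proof: its regions $R$ and $S$ are bounded by your $\gamma$ and this $\gamma'$. It needs neither the triangle $T$, nor parity, nor a case analysis of rotations at $d$.
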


\begin{proof}
We argue with an adjacent-simple complete graph drawing whose vertices are $a,b,c,d$. Each edge in this drawing crosses at most one edge (the others are adjacent). If $ab$ and $cd$ do not cross, they form the desired pair of disjoint edges. Hence we assume $ab$ and $cd$ cross, and argue that some other pair of edges must be disjoint. For an edge $uv$, we call a crossing point $p$ the \textit{first crossing of $uv$ as seen from $u$} if the edge segment of $uv$ from $u$ to $p$ is not crossed by any edges.

Let $p$ be the first crossing of $cd$ as seen from $c$. By adjacent-simplicity, $p$ lies on $ab$. Consider the simple closed curve that is the concatenation of the edge $ac$, the segment of $cd$ from $c$ to $p$, and the segment of $ab$ from $p$ to $a$. By the Jordan curve theorem, it cuts the plane into two regions, and we refer to the region not containing $b$ as $R$. We assume $ac$ and $bd$ cross, otherwise we are done. Let $q$ be the first crossing of $bd$ as seen from $b$. Consider the simple closed curve that is the concatenation of the edge $ab$, the segment of $bd$ from $b$ to~$q$, and the segment of $ac$ from $q$ to $a$. Again this curve cuts the plane into two regions, and we refer the region containing $c$ as $S$. Note that the boundaries between $R$ and $S$ do not cross (although they are partly identical and partly disjoint). We conclude that $R \subset S$. We can assume $S$ is bounded after a projective transformation (hence $R$ is also bounded), see Figure~\ref{fig:K4} for an illustration.

Now we consider the edge $ad$ as a curve emanating from $a$. Due to adjacent-simplicity, $ad$ cannot cross the boundaries of $R$ nor $S$. If $ad$ emanates from $a$ at a direction inside $R$, then $ad$ lies fully inside $R$. If $ad$ emanates from $a$ at a direction outside $R$, then $ad$ lies fully outside $S$. On the other hand, the edge $bc$ cannot cross the boundary of $S\setminus R$ by adjacent-simplicity. This implies $bc$ lies fully inside $S \setminus R$. Thus, $ad$ and $bc$ form a pair of disjoint edges.
\end{proof}
\begin{figure}[ht]
		\centering
		\includegraphics[page=1]{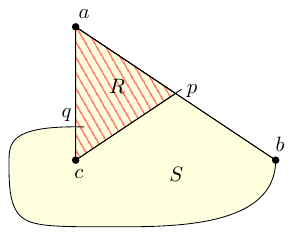}
		\caption{$R$ (marked with red stripes) is fully contained in $S$ (shaded in yellow).}
        \label{fig:K4}
\end{figure}

We call an adjacent-simple complete graph drawing \textit{Type~I}, \textit{Type~II}, or \textit{Type~III} if we can order its vertices as $v_1,v_2,\dots,v_n$ such that conditions I, II, or III below are satisfied respectively:
\begin{enumerate}
    \item[I.] $v_i,v_j,v_k,v_\ell$ induce only one pair of disjoint edges $v_iv_j$ and $v_kv_\ell$ for all $i<j<k<\ell$.

    \item[II.] $v_i,v_j,v_k,v_\ell$ induce only one pair of disjoint edges $v_iv_k$ and $v_jv_\ell$ for all $i<j<k<\ell$.

    \item[III.] $v_i,v_j,v_k,v_\ell$ induce only one pair of disjoint edges $v_iv_\ell$ and $v_jv_k$ for all $i<j<k<\ell$.
\end{enumerate}

See Figures~\ref{fig:typeI}, \ref{fig:typeII}, and~\ref{fig:typeIII} for illustrations of type-drawings of six vertices. It is not trivial to find these figures without a priori knowledge of their existence.

\begin{figure}
    \centering
    \includegraphics[page=1,scale=1.1]{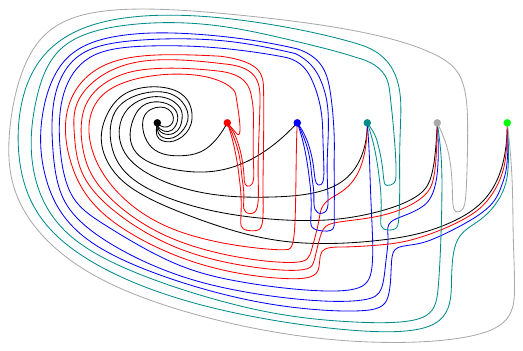}
    \caption{A Type~I drawing of vertices $v_1,v_2,\dots,v_6$ from left to right.}\label{fig:typeI}

    \bigskip
    \includegraphics[page=2,scale=1.1]{type-drawings.pdf}
    \caption{A Type~II drawing of vertices $v_1,v_2,\dots,v_6$ from left to right.}\label{fig:typeII}

    \bigskip
    \includegraphics[page=3,scale=1.1]{type-drawings.pdf}
    \caption{A Type~III drawing of vertices $v_1,v_2,\dots,v_6$ from left to right.}\label{fig:typeIII}
\end{figure}

\begin{proposition}\label{prop:type}
    There exist adjacent-simple complete graph drawings of Type~I, Type~II, and Type~III with arbitrarily many vertices.
\end{proposition}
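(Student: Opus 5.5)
I will give explicit inductive constructions, one for each type. In each case the vertices $v_1,\dots,v_n$ are placed on a horizontal line from left to right, and every edge is routed inside a fixed ``channel'' determined by its endpoints. The same proof scheme is used throughout. First, the routing rule must make the drawing adjacent-simple. Since the drawing is adjacent-simple, Proposition~\ref{prop:adjacent} shows that every $4$-tuple already has at least one disjoint pair. So it remains to show, for $i<j<k<\ell$, that the two non-prescribed pairs of non-adjacent edges among $v_iv_j,v_kv_\ell$; $v_iv_k,v_jv_\ell$; $v_iv_\ell,v_jv_k$ do cross, and that the prescribed pair is disjoint. Because each condition only involves four vertices, all verification reduces to a case analysis on the relative order of four indices, with the routing rule specified uniformly in $n$.

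\textbf{Type~III.} Draw every edge $v_iv_j$ as an upper semicircle-like arc over the interval $[i,j]$ whose height increases with $j-i$. Nested intervals give disjoint arcs, so $v_iv_\ell$ and $v_jv_k$ are disjoint. Interleaved intervals $[i,k],[j,\ell]$ cross exactly once. The remaining problem is that disjoint intervals $[i,j],[k,\ell]$ also give disjoint arcs, which is not allowed. To force a crossing there, I modify each arc $v_iv_j$ so that it first dips below the line, runs to the far right past all vertices, and returns above. The modification has to be uniform, for example a spiral in which each arc winds once around the whole vertex set. Checking that the adjacent arcs sharing an endpoint still never cross is the delicate part. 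I would fix an ordering of the ``exit directions'' at each vertex consistent with left-to-right order, and then argue by the Jordan curve theorem, as in Proposition~\ref{prop:adjacent}.

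\textbf{Type~I and Type~II.} I would proceed similarly, following Figures~\ref{fig:typeI} and~\ref{fig:typeII}. Each edge $v_iv_j$ is drawn as a monotone (x-monotone where possible) curve whose vertical position at each vertex $v_t$ lying between or outside $i,j$ is chosen by a rule depending on $t$ versus $i,j$: above or below $v_t$, and in a prescribed layer. Crossings between two edges are then read off by comparing their relative vertical orders at consecutive vertex positions, much like counting inversions in wiring diagrams or pseudoline arrangements. Two edges cross between consecutive columns exactly where their vertical order swaps. For Type~I the swaps are arranged so that edges with disjoint index intervals have consistent order throughout, while nested and interleaved pairs swap an odd number of times. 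Adjacent edges must never swap. For Type~II it is the interleaved pairs that must be consistent.

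\textbf{Main obstacle.} The key difficulty in all three cases is the same: ensuring that adjacent edges never cross while non-prescribed non-adjacent pairs do. I would handle it by encoding each edge as a sequence of layer positions and making adjacent edges sharing $v_i$ ordered lexicographically by their other endpoint at every column. That orderedness then gives adjacent-simplicity combinatorially, and the crossing pattern of non-adjacent pairs becomes a parity computation on four indices.
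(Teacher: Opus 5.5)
\textbf{There is a genuine gap: the proposal never fixes a drawing for any of the three types, and the concrete ideas it does give cannot work.} The verification, which is the whole content of this proposition, is deferred (``I would proceed similarly'', ``the delicate part'').

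\emph{Type~III.} Read your modification literally: each edge $v_iv_j$ leaves $v_i$ into the lower half-plane, meets the line once at a point $p_{ij}$ to the right of $v_n$, and reaches $v_j$ through the upper half-plane. Then the edge is a lower chord $v_ip_{ij}$ followed by an upper chord $p_{ij}v_j$. Two chords in the same half-plane cross an odd number of times exactly when their endpoints interleave on the line. For $i<j<k<\ell$, the lower chords of $v_iv_\ell$ and $v_jv_k$ interleave iff $p_{i\ell}$ is left of $p_{jk}$. The upper chords interleave iff $p_{jk}$ is left of $p_{i\ell}$. So exactly one half contributes an odd count, and the nested pair always crosses an odd number of times, however you handle adjacent edges. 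But that is precisely the pair that must be disjoint in Type~III.

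\emph{Type~I.} You want nested and interleaved pairs to cross an odd number of times, and disjoint-interval pairs not at all. This is impossible once $n\ge 5$. In every drawing of $K_5$ the total number of crossings between independent edges is odd. It is invariant mod~$2$, because pushing an edge $e$ across a vertex $v$ changes its crossing number with each of the four edges at $v$ by one, and only two of those are independent of $e$; the convex straight-line drawing has five crossings. Your pattern instead gives two odd pairs in each of the five $4$-tuples, an even total. In your $x$-monotone model you can see this directly: $v_1v_n$ passes at least three inner vertices, two of them on the same side, and the corresponding nested pair crosses evenly.

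\emph{What this means for the general scheme.} Every type-drawing needs some prescribed crossings to be even, hence at least two, and a ``parity computation on four indices'' cannot certify those. There are further problems:
\begin{itemize}
\item In Types~II and~III the disjoint-interval pairs must cross, which $x$-monotone edges never do.
\item Once an edge leaves its $x$-range, your column-by-column order bookkeeping is no longer defined.
\item Lexicographic ordering at a common left endpoint says nothing about adjacent pairs $v_iv_j$, $v_jv_k$.
\end{itemize}

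\emph{The missing idea} is a mechanism that adds crossings in pairs, selectively and inductively. This is what the paper's Type~III sketch does. The edges from $v_i$ to later vertices start as thin nested loops along a curve from $v_i$ that crosses every edge $v_av_b$ with $a<b<i$ but no edge incident to $v_i$. They then continue as nested upper half-circular arcs, with closer loop endpoints going to farther vertices. Each disjoint-interval pair then picks up two crossings from a loop, interleaved pairs cross once along the arcs, nested pairs stay disjoint, and adjacent edges never cross.
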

\begin{proof}[Proof Sketch]
    One approach towards a proof is to extend the drawing pattern in Figures~\ref{fig:typeI}, \ref{fig:typeII}, and~\ref{fig:typeIII} to arbitrarily many vertices. We briefly explain the idea on how to construct the drawing in Figure~\ref{fig:typeIII}. The patterns depicted in Figures~\ref{fig:typeI} and~\ref{fig:typeII} are observable to a moderate level, hence we skip their explanations (but curious readers are encouraged to try).

    To obtain a Type~III drawing depicted in Figure~\ref{fig:typeIII}, we first put vertices $v_1,v_2,\dots,v_n$ on a horizontal line such that vertex $v_i$ is left of vertex $v_j$ for any $i<j$. Then we inductively consider each $v_i$ and draw the edges from $v_i$ to $v_j$ for all $j > i$. When $i=1$ or $2$, the edges are created as upper-half circular arcs between $v_1$ and $v_j$. When $i > 2$, the edges $v_iv_j$ are drawn in the following way: we create a curve emanating from $v_i$ that crosses every edge $v_av_b$ with $a<b<i$ but does not cross edges incident to $v_i$; we do not require this curve to end at a vertex so its existence can be guaranteed; next, we turn this curve into $n-i$ thin loops nesting each other; then we slightly move the endpoints of these loops to the right of $v_i$ at different distances; we further create upper-half circular arcs between these endpoints and vertices $v_j$ for all $j > i$; we make sure endpoints closer to $v_i$ are connected to further vertices; finally, the edges $v_iv_j$ are the concatenations of the near-loops, starting at $v_i$ and ending near it, together with the circular arcs, starting near $v_i$ and ending at $v_j$.
\end{proof}

Our final ingredient for Theorem~\ref{thm:main2} is the celebrated theorem of Ramsey~\cite{ramsey1930problem}.
\begin{theorem}\label{thm:ramsey}
    For any positive integers $k, m, r$, there exists an integer $n$ with the following property. For every partition of all size-$r$ subsets of a size-$n$ set $S$ into $k$ classes, there exists a size-$m$ subset $S'$ of $S$ such that all size-$r$ subsets of $S'$ belong to the same class in this partition.
\end{theorem}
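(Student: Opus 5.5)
This is the classical finite Ramsey theorem, and I would prove it by a double induction: first fix $k$ and induct on the uniformity $r$, and within each $r$ induct on the target sizes. It is convenient to prove the slightly stronger off-diagonal form. For targets $m_1,\dots,m_k$, there is $n=R_r(m_1,\dots,m_k)$ such that every $k$-colouring of the size-$r$ subsets of an $n$-set has some colour class $c$ and an $m_c$-subset all of whose $r$-subsets have colour $c$. The stated theorem is the diagonal case $m_1=\dots=m_k=m$.

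The base case $r=1$ is the pigeonhole principle. If $n\ge k(m-1)+1$ (or $\sum_c (m_c-1)+1$ in the off-diagonal form), some colour class of singletons has size at least $m$, and that set works.

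For the inductive step, assume the statement holds for $r-1$ with any targets. I then induct on $\sum_c m_c$. The base cases are those where some $m_c\le r-1$. There the statement is trivial with $n=m_c$, since an $m_c$-set has no $r$-subsets and so is vacuously monochromatic in colour $c$.

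In general, set $N_c=R_r(m_1,\dots,m_c-1,\dots,m_k)$, which exists by the inner induction, and take $n=1+R_{r-1}(N_1,\dots,N_k)$. Given a colouring $\chi$ of the $r$-subsets of $S$ with $|S|=n$, I would:
\begin{enumerate}
\item fix a point $x\in S$;
\item colour each $(r-1)$-subset $T$ of $S\setminus\{x\}$ by $\chi(T\cup\{x\})$;
\item apply the outer induction hypothesis to get a colour $c$ and a set $Y\subseteq S\setminus\{x\}$ with $|Y|=N_c$ such that every $r$-set consisting of $x$ together with $r-1$ points of $Y$ has colour $c$;
\item apply the inner induction to $\chi$ restricted to the $r$-subsets of $Y$.
\end{enumerate}

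The last step has two outcomes. Either some colour $c'\ne c$ admits an $m_{c'}$-subset of $Y$ that is monochromatic in colour $c'$, and we are done. Or colour $c$ admits an $(m_c-1)$-subset $Z\subseteq Y$ that is monochromatic in colour $c$. In that case $Z\cup\{x\}$ has size $m_c$, and each of its $r$-subsets either lies in $Z$ or consists of $x$ plus $r-1$ points of $Y$. Both kinds have colour $c$, so $Z\cup\{x\}$ is the required set.

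The only real subtlety is bookkeeping the nested induction correctly. The outer induction on $r$ supplies the auxiliary $(r-1)$-colouring, the inner induction on $\sum_c m_c$ handles the restriction to $Y$, and the base cases with $m_c<r$ must be treated as vacuous. No topological input from earlier in the paper is needed. In the application to Theorem~\ref{thm:main2}, this result will presumably be used with $r=4$, with the classes recording which pair of disjoint edges each ordered $4$-set induces (cf.\ Proposition~\ref{prop:adjacent}).
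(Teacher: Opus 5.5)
Your proof is correct. It is the standard double induction on the off-diagonal numbers $R_r(m_1,\dots,m_k)$, and the bookkeeping checks out:
\begin{itemize}
\item in the inductive step every $m_c\ge r\ge 2$, so the decremented targets stay positive;
\item the recursion on $\sum_c m_c$ terminates at the vacuous cases $m_c<r$;
\item every $r$-subset of $Z\cup\{x\}$ either lies in $Z$ or consists of $x$ plus an $(r-1)$-subset of $Y$, so it has colour $c$.
\end{itemize}

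The paper gives no proof of this statement. It quotes Ramsey's theorem as a black box from \cite{ramsey1930problem} and uses only the existence of $n$: with $k=3$, $r=4$ in the proof of Theorem~\ref{thm:main2}, and with $r=43$ and finitely many label classes $(i,j,k,\ell)$ in the proof of Theorem~\ref{thm:main3}. So there is no competing route to compare; your argument just makes that imported step self-contained.

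The paper's applications implicitly use monotonicity, which you could state explicitly. If some $n$ works, so does every $n'\ge n$, by restricting the colouring to an $n$-subset; this is what justifies the paper's phrase ``assuming $n$ to be sufficiently large''. Your recursion gives tower-type bounds in $r$. That is irrelevant here, since only finiteness is needed.
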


\begin{proof}[Proof of Theorem~\ref{thm:main2}]

    Let $n$ be guaranteed by Theorem~\ref{thm:ramsey} with $k=3$, an arbitrarily fixed $m$, and $r=4$. For any adjacent-simple complete graph drawings of vertices $v_1,v_2,\dots,v_n$, we can partition its quadruples $v_i,v_j,v_k,v_\ell$ with $i<j<k<\ell$ into three classes as follows: if $v_iv_j$ and $v_kv_\ell$ are disjoint, we put this quadruple in class~I; else if $v_iv_k$ and $v_jv_\ell$ are disjoint, we put this quadruple in class~II; else $v_iv_\ell$ and $v_jv_k$ must be disjoint by Proposition~\ref{prop:adjacent}, and we put this quadruple in class~III. According to Theorem~\ref{thm:ramsey}, there exist $m$ vertices such that all quadruples among them belong to the same class. Depending on whether this is class~I, II, or~III, any two edges induced by them must be disjoint if they are disjoint in a drawing of Type~I, II, or~III respectively (with the same vertex order). Thus, a graph $G$ is a plane structure if all type-drawings of $m$ vertices contain a plane sub-drawing of $G$. For the remainder of the proof, we only consider type-drawings, and by a mild abuse of notation, we denote their vertices as $v_1,v_2,\dots,v_m$ in order. To find a plane sub-drawing of $G$ inside a type-drawing, it suffices to arrange its vertices into a sequence, and identify them with $v_1,\dots,v_m$ according to the order of this sequence, then we can check that crossings are not induced by the corresponding edges under such an identification.

    Let $S$ be an arbitrary squid within which $a,b,c$ form a triangle, $u_1,\dots,u_s$ are only adjacent to $a$, and $w_1,\dots,w_t$ are only adjacent to $c$. We shall arrange these vertices into a sequence based on the type-drawing under consideration. In a drawing of Type~I, the sequence can be $u_1,\dots,u_s$, $a$, $b$, $c$, $w_1,\dots,w_t$. In a drawing of Type~II, the sequence can be $a$, $w_1,\dots,w_t$, $b$, $u_1,\dots,u_s$, $c$. In a drawing of Type~III, the sequence can be $a$, $w_1,\dots,w_t$, $c$, $b$, $u_1,\dots,u_s$. It is easy to verify that the edges do not cross in the correspondingly induced sub-drawings of $S$. Also, it is very easy to arrange isolated vertices anywhere in these sequences without inducing crossings. Hence, we conclude squids with extra isolated vertices are plane structures as desired.

    Let $U$ be an arbitrary caterpillar union whose components are denoted by $C_1,C_2,\dots,C_s$ where each $C_i$ is a caterpillar. We shall arrange a vertex sequence for each $C_i$ and then combine them into a sequence for $U$. For this purpose, we use $C$ to denote an arbitrary caterpillar, $u_1,u_2,\dots,u_t$ to denote its central path, and $w^i_j$ to denote its non-path vertices adjacent to $u_i$. For any fixed $i$, the relative order among $w^i_j$ is arbitrary in our sequences and the range of $j$ will be unimportant.
    
    Inside a Type~I drawing, we arrange the vertices of $C$ as follows: $u_1,u_2,\dots,u_t$ appear increasingly, in other words, $u_i$ is before $u_j$ for $i<j$; $w^i_j$ appears between $u_i$ and $u_{i+1}$ for $i<t$; $w^t_j$ appears after $u_t$. We denote such a sequence for $C$ as $S(C)$. Then, the sequence for $U$ can be $S(C_1)$, $S(C_2)$, $\dots$, $S(C_s)$. If we identify the vertices of $U$ with those of a Type~I drawing according to this sequence, we can check that any two non-adjacent edges of $U$ are identified as $v_iv_j$ and $v_kv_\ell$ with $i<j<k<\ell$, hence are disjoint by the definition of Type~I drawings. Since type-drawings are adjacent-simple, the induced sub-drawing of $U$ is plane.

    Inside a Type~II drawing, we arrange the vertices of $C$ into two sequences $L(C)$ and $R(C)$: $L(C)$ consists of all $u_i$ with odd $i$ appearing increasingly, followed by all $w^i_j$ with even $i$ appearing increasingly (with respect to $i$); $R(C)$ consists of all $w^i_j$ with odd $i$ appearing increasingly (with respect to $i$), followed by all $u_i$ with even $i$ appearing increasingly. Then, the sequence for $U$ can be $L(C_1)$, $L(C_2)$, $\dots$, $L(C_s)$, $R(C_1)$, $R(C_2)$, $\dots$, $R(C_s)$. Similarly, using the construction of this sequence and the definition of Type~II drawings, we can verify that this sequence induces a plane sub-drawing of $U$.
    
    Inside a Type~III drawing, we first arrange the central path in a zig-zag fashion: $u_1$, $u_3$, $u_5,\dots,u_{t-1}$, $u_{t}$, $u_{t-2},\dots,u_4$, $u_2$ if $t$ is even; $u_1$, $u_3$, $u_5,\dots,u_{t-2}$, $u_{t}$, $u_{t-1},\dots,u_4$, $u_2$ if $t$ is odd. On top of this, $w^1_j$ appears after $u_2$; $w^i_j$ appears between $u_{i-1}$ and $u_{i+1}$ for $1<i<t$; $w^t_j$ appears between $u_{t-1}$ and $u_t$. This describes a sequence $S(C)$ for $C$, and we break it into two sub-sequences: $L(C)$ is the prefix of $S(C)$ before but not including $u_t$ if $t$ is even; $L(C)$ is the prefix of $S(C)$ until and including $u_t$ if $t$ is odd; $R(C)$ is the suffix of $S(C)$ after deleting $L(C)$. Then, the sequence for $U$ can be $L(C_1)$, $L(C_2)$, $\dots$, $L(C_s)$, $R(C_s)$, $\dots$, $R(C_2)$, $R(C_1)$. Again, we can check that the induced sub-drawing of $U$ is plane.

    It suffices for us to show that a plane structure $G$ must be either a squid with extra vertices or a union of disjoint caterpillars. According to Proposition~\ref{prop:type}, all sufficiently large type-drawings must contain a plane sub-drawing of $G$. We shall exploit this fact to restrict the structure of $G$.

    First, we argue that $G$ cannot contain a cycle of length more than three. Suppose $G$ contains a cycle of length $s \geq 4$, we let $u_1,u_2,\dots,u_s$ denote the vertices appearing along this cycle. Consider a plane sub-drawing of $G$ inside a Type~I drawing, and following the vertex order of this type-drawing, $u_1,u_2,\dots,u_s$ appear in a sequence. Without loss of generality, we assume $u_2$ appears after $u_1$. Then, $u_3$ cannot appear between $u_1$ and $u_2$, otherwise $u_1u_2$ crosses $u_3u_4$ (independent of the position of $u_4$). On the other hand, $u_3$ cannot be before $u_1$ otherwise $u_1u_s$ crosses $u_2u_3$. Hence $u_3$ must appear after $u_2$. By the same argument, $u_4$ must be after $u_3$, and in general $u_i$ must be after $u_{i-1}$. However, by the cyclic nature and the same argument, we can reach the conclusion that $u_1$ appears after $u_s$, which is a contradiction.

    Next, we argue that $G$ must be a squid if it contains a triangle but no isolated vertices. Let $a,b,c$ be the vertices of this triangle. Consider a plane sub-drawing of $G$ inside a Type~II drawing. Without loss of generality, we assume $a$ is before $b$ and $b$ is before $c$ in the vertex order of this type-drawing. Then any edge of $G$ not incident to $a$, $b$, or $c$ must have a vertex between $a$ and $b$ (to not cross $ab$) and a vertex between $b$ and $c$ (to not cross $bc$). However, such an edge would cross $ac$, thus no such edge exists. Together with the fact that $G$ cannot contain a cycle of length four, we conclude any vertex $x \neq a,b,c$ is incident to exactly one edge, formed with $a$, $b$, or $c$. Now, suppose $a,b,c$ all form edges with non-triangle vertices. Consider a plane sub-drawing of $G$ inside a Type~I drawing, and assume $b$ is between $a$ and $c$ in the vertex order of this type-drawing. Then an edge $bx$ with $x\neq a,c$ must cross $ac$, which is a contradiction. Hence, one of $a,b,c$ does not form edges outside the triangle, and we conclude $G$ must be a squid.

    \begin{figure}[ht]
        \centering
        \includegraphics[page=2]{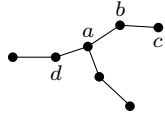}
        \caption{A graph contained in all acyclic graphs except unions of disjoint caterpillars.}
        \label{fig:caterpillarkiller}
    \end{figure}

    Lastly, we argue that $G$ must be a caterpillar union if it is acyclic. It is sufficient to show that $G$ does not contain the sub-graph depicted in Figure~\ref{fig:caterpillarkiller}. Suppose $G$ contains this sub-graph and some of its vertices are $a,b,c,d$ as in Figure~\ref{fig:caterpillarkiller}. Consider a plane sub-drawing of $G$ inside a Type~I drawing. By the pigeonhole principle, two neighbors of $a$ are on the same side of $a$ (before or after) in the vertex order of this type-drawing. Without loss of generality, we assume that $b$ and $d$ are on the same side of $a$, and $b$ is closer to $a$ than $d$. Then $b$ appears between $a$ and $d$, making $bc$ cross $ad$ (independent of the position of $c$), which is a contradiction. This concludes the proof.
\end{proof}

\section{The separate-simple case}\label{sec:separate}

We shall prove Theorem~\ref{thm:main3} in this section, starting with the observation below. A \textit{cell} of a drawing (not necessarily plane) refers to a connected component of the complement of this drawing inside the plane.

\begin{lemma}\label{lem:samecell}
    In any separate-simple drawing of a triangle $abc$ union an edge $de$, such that the vertices $d,e$ are in the same cell of $abc$, the edge $de$ does not cross one of~$ab,bc,ac$.
\end{lemma}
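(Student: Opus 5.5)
The plan is a parity argument. The three edges $ab$, $bc$, $ca$, traversed in this order, form a closed curve $\gamma$ (it may be self-intersecting, since adjacent edges may cross in a separate-simple drawing). The edge $de$ is non-adjacent to each of $ab$, $bc$, $ac$, because $d,e\notin\{a,b,c\}$. So by separate-simplicity, $de$ crosses each of them at most once. Suppose for contradiction that $de$ crosses all three. Then it crosses each exactly once, and meets $\gamma$ in exactly $3$ proper crossings. I will show that any curve joining two points in the same cell of $\gamma$ crosses $\gamma$ an even number of times, which gives the contradiction.

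The key tool is the standard mod-$2$ index of a point with respect to a closed curve. For a point $x\notin\gamma$, let $\iota(x)\in\{0,1\}$ be the parity of the number of crossings between $\gamma$ and a generic curve from $x$ to a far-away point (say, a ray to infinity in general position). I would establish three facts in this order:
\begin{enumerate}
\item[(i)] $\iota(x)$ is well defined, i.e.\ independent of the chosen generic path. The reason is that the union of two such paths is a closed curve, and two closed curves in general position cross an even number of times.
\item[(ii)] $\iota$ is constant on each cell of $\gamma$, since it is locally constant off $\gamma$.
\item[(iii)] Along any curve from $x$ to $y$ that meets $\gamma$ only in proper crossings at non-self-intersection points of $\gamma$, the number of crossings is congruent to $\iota(x)+\iota(y)$ mod $2$.
\end{enumerate}
Combining (ii) and (iii) with the hypothesis that $d$ and $e$ lie in the same cell gives that $de$ crosses $\gamma$ an even number of times, contradicting the count of $3$. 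Hence at least one of $ab$, $bc$, $ac$ is not crossed by $de$. The path used here is $de$ itself; its interior avoids $a,b,c$ by the mild assumptions.

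The step I expect to need the most care is the genericity needed for (i) and (iii). First, $\gamma$ may have self-crossings, and $de$ might pass exactly through a point where, say, $ab$ crosses $ac$. Second, $\gamma$ passes through the vertices $a,b,c$ with corners. To handle the first issue, I would note that such a point is then a crossing of $de$ with two different triangle edges, and I would count it as two crossings of $\gamma$. A small local perturbation of $de$ near that point turns it into two ordinary crossings, one with each branch, without changing which triangle edges are crossed. The same perturbation, together with keeping $de$ away from $a,b,c$, makes $de$ generic with respect to $\gamma$, so (iii) applies.

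Since all edges are piecewise smooth, injective, and cross properly, the even-crossing fact for two closed curves in (i) can be obtained by the usual transversality or approximation argument. The mod-$2$ index is in any case classical, so I would cite it rather than reprove it.
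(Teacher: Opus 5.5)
Your proposal is correct and is essentially the paper's argument: your mod-$2$ index $\iota$ is the chessboard 2-coloring of the cells of $abc$ that the paper cites from Diestel or Hertrich et al. In both proofs, $d$ and $e$ lying in the same cell forces $de$ to cross the triangle an even number of times, which rules out crossing each of $ab,bc,ac$ exactly once; your local perturbation where $de$ passes through a crossing of triangle edges plays the role of the paper's closing remark about such degenerate points.
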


\begin{proof}
Two cells of the triangle $abc$ is said to be adjacent if their boundaries coincide along a segment of an edge. We can assign black and white colors to the cells of $abc$ such that adjacent cells have different colors. Such a ``chessboard'' coloring can be argued with standard graph theory knowledge, see e.g. Section~4.6 of~\cite{diestel2025graph} (especially Exercise~24), or we may simply apply Proposition~2 in \cite{hertrich2022coloring}. Now, as we travel along the trajectory of the edge $de$, the cell color of our location changes exactly when we encounter a crossing between $de$ and edges of $abc$. Since $d,e$ are in the same cell of $abc$, the number of such crossings must be even. Hence $de$ cannot cross all three of~$ab,bc,ac$ by separate-simplicity. We remark that multiple edges are allowed to cross at the same point, if $de$ passes through a crossing point between $ab$, $ac$, or $bc$, the no-touching mild assumption restricts the behavior of $de$ and our proof still works.
\end{proof}

\begin{proposition}\label{prop:separate}
    There exists at least one pair of disjoint edges in every separate-simple complete graph drawing on forty-three vertices. 
\end{proposition}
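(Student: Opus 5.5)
Assume for contradiction that the drawing on $43$ vertices has no pair of disjoint edges. Then every pair of non-adjacent edges crosses, and by separate-simplicity they cross exactly once. The plan is to find a triangle $abc$ and an edge $de$, disjoint from $\{a,b,c\}$, with $d$ and $e$ in the same cell of the drawing of $abc$. Lemma~\ref{lem:samecell} then says $de$ misses one of $ab,bc,ac$, which is the contradiction. So the whole task is to force two vertices into a common cell of a triangle.

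\textbf{Step 1: bound the cells of a triangle.} Fix a triangle $abc$. The edge $ab$ is adjacent to both $ac$ and $bc$, so these may cross arbitrarily often, and there is no a priori bound on the number of cells. I would therefore not count cells directly. Instead I would work with a parity invariant, as in the proof of Lemma~\ref{lem:samecell}. Take the chessboard coloring of the cells of $abc$. Since every edge $xy$ with $x,y\notin\{a,b,c\}$ crosses each of $ab,bc,ac$ exactly once, it crosses the triangle exactly three times. Hence $x$ and $y$ receive \emph{different} colors. With three vertices $x,y,z$ outside the triangle, the three pairs cannot all be differently colored with only two colors. So among any $3$ vertices outside $\{a,b,c\}$, two lie in cells of the same color.

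\textbf{Step 2: refine from color to cell.} Same color is not yet same cell. The refinement I would aim for is a finer invariant with boundedly many classes: for each vertex $x\notin\{a,b,c\}$, record the parity of crossings of the edges $xa$, $xb$, $xc$ with the three triangle edges they are not adjacent to. Concretely, $xa$ crosses $bc$ exactly once, and its crossings with $ab$ and $ac$ are unrestricted. Combining these parities for each of the three edges gives a signature with a bounded number of values. I would then show that two vertices with equal signature, joined by an edge crossing each triangle side exactly once, yield a contradiction by a Jordan-curve argument on closed curves such as $x\,a\,y$ closed by $xy$. The constant $43$ suggests pigeonholing $40$ outer vertices into a few classes, or iterating a Ramsey-type bound $R(3,3,\dots)$ with small parameters. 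I would choose the number of signature classes $k$ so that $3+ (\text{something in }k) = 43$. For example, $k=2$ colors iterated over a chain of nested triangles, or $40 = 2\cdot 20$ obtained by repeatedly fixing a new triangle inside a monochromatic class.

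\textbf{Expected obstacle.} Step 2 is the crux. Because adjacent edges can cross many times, a single triangle's cell structure is wild, and parity alone gives only "same color, different cell". Here is what I would try first. Iterate Step 1: choose a triangle, keep the majority color class, and choose a new triangle inside that class. Repeat until two vertices agree in color for every triangle in a well-chosen family. Then argue that the edge between two such vertices gives crossing parities with some triangle that are inconsistent with the one-crossing-per-side requirement, which is exactly the parity contradiction. Counting how many rounds are needed should reproduce the constant $43$.
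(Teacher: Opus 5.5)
Your Step~1 already proves the proposition, in fact for six vertices; you just did not draw the conclusion.

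Under your contradiction hypothesis you showed that \emph{every} two vertices $x,y\notin\{a,b,c\}$ receive different chessboard colors. This is because $xy$ crosses each of $ab$, $bc$, $ac$ exactly once, so the color changes an odd number of times along $xy$. A pass through a crossing point of two sides counts twice, and the no-touching assumption makes that count legitimate. In the next sentence you observed that among three such vertices two must share a color. These two statements contradict each other, and the proof is over.

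Put differently, your worry that ``same color is not yet same cell'' is unfounded. The proof of Lemma~\ref{lem:samecell} uses only that $d$ and $e$ have the same color. So for same-colored $x,y$, the edge $xy$ crosses $ab\cup bc\cup ac$ an even number of times. By separate-simplicity it crosses at most three times, so it crosses at most twice and misses one side.

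Step~2 should therefore be deleted. As written it is not an argument: the signature invariant, the Jordan-curve step and the count reproducing $43$ are only stated intentions. If that step were actually needed, this would be a genuine gap. It is not needed.

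The paper, by contrast, insists on finding two vertices in the same \emph{cell}. It takes two disjoint triangles $abc$ and $def$ plus $37$ further vertices. It then proves that overlapping cells of the two triangles are both incident to one of the at most nine $abc$--$def$ crossings. Hence at most $18$ cells of $abc$ are not covered by a cell of $def$, and at most $18$ cells of $def$ cover cells of $abc$. Pigeonholing then puts two of the remaining vertices in a common cell of one triangle, and Lemma~\ref{lem:samecell} finishes.

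Your parity route avoids all of that cell topology (contours, the overlap claim, projective transformations) and gives the sharper threshold $6$. It is essentially the classical fact that two vertex-disjoint closed curves cross an even number of times: the triangles $abc$ and $xyz$ would have to cross $9$ times.
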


\begin{proof}
We prove that $r = 18 + 18 + 1 + 6 = 43$ vertices suffice to induce a pair of disjoint edges. We refer to these vertices as $a,b,c,d,e,f$ and \textit{the rest}. We shall argue via the cells of the triangle $abc$ and the triangle $def$. If the union of a cell $\Omega$ of $abc$ and another cell $\Omega'$ of $def$ is the whole plane, then, by the pigeonhole principle, either $\Omega$ or $\Omega'$ contains at least two vertices among the rest. In this case, the edge between those two vertices is disjoint from one of $ab,ac,bc$ or $de,df,ef$ by Lemma~\ref{lem:samecell}. Therefore, we assume for the remainder of the proof that this case does not happen.

Under this assumption, we investigate how cells of $abc$ and $def$ can interact. We call a cell of $abc$ \textit{covered} if it is contained in a cell of $def$. We call a cell of $def$ \textit{covering} if it contains a cell of $abc$. A cell of $abc$ is said to be \textit{effective} if it contains a vertex from the rest. If there is a cell of $abc$ that contains two vertices from the rest, then we are done by Lemma~\ref{lem:samecell}. Otherwise, there must be $18+18+1$ effective cells of $abc$. We will argue that there are at most $18$ non-covered cells of $abc$ and at most $18$ covering cells of $def$. Thus, there must be two effective cells of $abc$ contained in the same cell of $def$ by the pigeonhole principle. Consequently, the vertices from the rest in those two effective cells are contained in the same cell of $def$, and again, the proof can be concluded by Lemma~\ref{lem:samecell}.

It suffices for us to prove the bound on non-covered cells and covering cells. To that end, we need the following claim, whose proof we postpone. A cell $\Omega$ of $abc$ is said to \textit{overlap} a cell $\Omega'$ of $def$ if $\Omega \cap \Omega' \neq \emptyset$, $\Omega \not \subset \Omega'$, and $\Omega' \not \subset \Omega$.

\noindent \textbf{Claim.} If a bounded cell $\Omega$ of $abc$ overlaps a bounded cell $\Omega'$ of $def$, there exists a point on the boundaries of $\Omega$ and $\Omega'$ that is a crossing between an edge among $ab,ac,bc$ and another edge among $de,df,ef$.

Note that the claim still holds when $\Omega$ and $\Omega'$ are not necessarily bounded. Indeed, by our assumption at the beginning of the proof, there is a point $p$ not in $\Omega \cup \Omega'$, thus we could apply a projective transformation to locate $p$ at infinity, so we can conclude the existence of such a crossing using the original claim.

To bound the number of non-covered cells of $abc$ and covering cells of $def$, we show that each such cell must overlap some cell of the other triangle. Let $\Omega_1$ be a non-covered cell of~$abc$. We take a point $p$ on the boundary of $\Omega_1$ and not on the edges of $def$. Let $\Omega_1'$ be the cell of $def$ that contains a small neighborhood of $p$, then we can argue $\Omega_1$ overlaps $\Omega_1'$. Let $\Omega_2'$ be a covering cell of $def$. Consider the union $U$ of all cells of $abc$ contained in $\Omega_2'$. Let $\Omega_2$ be a cell of $abc$ containing a point $p \in \Omega_2' \setminus U$, then we can argue $\Omega_2$ overlaps $\Omega_2'$. According to the claim, $\Omega_1$ and $\Omega_2'$ must be incident to some crossing point between an edge of $abc$ and an edge of $def$. By separate-simplicity, there are at most $9$ such crossing points, and each of them is incident to at most two cells of $abc$. Hence, there are at most $18$ non-covered cells of $abc$ and at most $18$ covering cells of $def$.

\begin{figure}[ht]
    \centering
    \includegraphics[page=2]{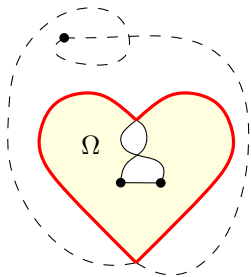}
    \caption{A bounded cell $\Omega$ (shaded) of a triangle drawing, whose contour is drawn in bold and red. Segments not on the boundary of $\Omega$ are drawn dashed.}
    \label{fig:contour_example}
\end{figure}

\noindent \textbf{Proof of Claim.} We define the \textit{contour} of $\Omega$ to be the set of all points on the boundary of $\Omega$ from which one can draw a curve to the unbounded cell without crossing any other point on the boundary; see Figure~\ref{fig:contour_example} for an example. We argue that the contour of $\Omega$ must be a simple closed curve. As $\Omega$ is a bounded cell, its contour must contain a loop (to prevent the points in the interior of $\Omega$ from being in the unbounded cell). Let $L$ be such a loop. Any segment on the boundary of $\Omega$ that is in the interior of $L$ cannot be part of the contour by definition (as the loop blocks the way to the unbounded cell). As a bounded cell is connected, there cannot be parts of $\Omega$ outside $L$ and thus any segments outside of $\Omega$ cannot be on the boundary of~$\Omega$. This means the contour of $\Omega$ is $L$, which is a simple closed curve. By the Jordan curve theorem, the contour of $\Omega$ cuts the plane into two regions, and the bounded one contains $\Omega$.

We also argue that the boundary of $\Omega$ must be connected. To this end, we show that the unbounded cell must have a connected boundary. It follows that the boundary of a cell must be connected, because a projective transformation can turn any bounded cell into the unbounded cell without changing the connectedness of the boundary. Assume, for a contradiction, that the boundary of the unbounded cell consists of several connected components. As the cells are induced by three adjacent edges (forming a potentially self-crossing cycle), there must be segments of those edges connecting the components of this boundary. For an arc not incident to the unbounded cell, it must be inside a bounded cell and thus enclosed by a loop. However, being enclosed by a loop would prevent it from connecting different boundary components of the unbounded cell.

The contour of $\Omega'$ is defined analogously and has the same properties as the contour of $\Omega$. If the contour of $\Omega$ and the contour of $\Omega$ intersect, their intersection point is a crossing point $p$ on their boundaries. If these two contours do not intersect, they must be nested, that is, one of the contours must be outside the other, wrapping around it because $\Omega$ overlaps $\Omega'$. Assume, without loss of generality, the contour of $\Omega$ wraps around outside the contour of $\Omega'$. Then the contour of $\Omega$ is not contained inside $\Omega'$. We find a point $q$ on the boundary of $\Omega$ inside $\Omega'$ in the following way: we start with a point $q' \in \Omega' \setminus \Omega$ and traverse to a point $q'' \in \Omega \cap \Omega'$ along a trajectory that stays completely in $\Omega'$; as $q'$ is not in $\Omega$ and $q''$ is in $\Omega$, the boundary must have been encountered in a point $q$. Traversing along the boundary of $\Omega$ from $q$ to some point on the contour of $\Omega$, the boundary of $\Omega'$ must have been crossed (in order to get from inside to outside $\Omega'$). Thus, the boundaries of $\Omega$ and $\Omega'$ cross.
\end{proof}

\begin{proof}[Proof of Theorem~\ref{thm:main3}]
    We name the vertices as $v_1,v_2,\dots,v_n$ and order them such that $v_i < v_j$ for $i<j$. 
    According to Proposition~\ref{prop:separate}, among any $r=43$ vertices $u_1 < u_2 < \dots < u_r$, there exists a pair of disjoint edges $u_iu_j$ and $u_ku_\ell$ with $i<j$, $k<\ell$, and $i<k$. We put this $r$-tuple into the class labeled by $(i,j,k,\ell)$, hence obtain a partition of all $r$-tuples of vertices into classes. By Theorem~\ref{thm:ramsey} and assuming $n$ to be sufficiently large, we can find $(2m+1)r$ many special vertices $w_1<w_2<\dots<w_{(2m+1)r}$ such that every $r$-tuple among them are in the same class. By an abuse of notation, we use $(i,j,k,\ell)$ to denote the label of this class.

    If $i<j<k<\ell$, we argue the edges $w_{(2s-1)r}w_{2sr}$ $(1\leq s\leq m)$ are pairwise disjoint. Indeed, for any $s<t$, we can choose as an $r$-tuple of the vertices such that $w_{(2s-1)r}$, $w_{2sr}$, $w_{(2t-1)r}$, $w_{2tr}$ are the $i$-th, $j$-th, $k$-th, $\ell$-th element in this $r$-tuple respectively. This is possible due to the many special vertices that are in between $w_{(2s-1)r}$, $w_{2sr}$, $w_{(2t-1)r}$, and~$w_{2tr}$. Since this $r$-tuple is in the class labeled by $(i,j,k,\ell)$, the edges $w_{(2s-1)r}w_{2sr}$ and $w_{(2t-1)r}w_{2tr}$ are disjoint. If $i<k<j<\ell$, we can argue the edges $u_{sr}u_{(s+m)r}$ $(1\leq s\leq m)$ are pairwise disjoint analogously. If $i<k<\ell<j$, we can argue the edges $u_{sr}u_{(2m+1-s)r}$ $(1\leq s\leq m)$ are pairwise disjoint analogously. Therefore, in any case, there are $m$ pairwise disjoint edges. Because the number $m$ is arbitrary and isolated vertices can be dealt with easily, we conclude that unions of disjoint edges and isolated vertices are plane structures.

    \begin{figure}[ht]
    \centering
    \includegraphics[page=7]{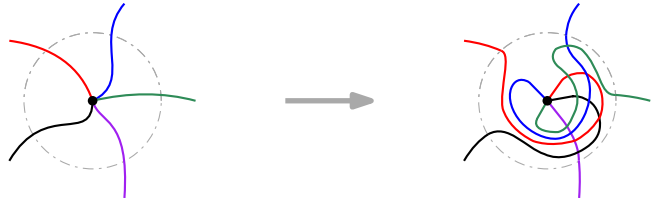}
    \caption{Redrawing the edges around a vertex to make them cross each other.}
    \label{fig:separate_simple_adjacent_cross}
    \end{figure}

    Lastly, we show that graphs other than unions of disjoint edges and isolated vertices cannot be the plane structures guaranteed in separate-simple complete graph drawings. Indeed, it is easy to construct separate-simple complete graph drawings in which any two adjacent edges cross: fix an arbitrary simple complete graph drawing; redraw the initial parts of all edges incident to each vertex in such a way that the edges now emanate in the opposite cyclical order at this vertex and then pairwise cross; the rest of the drawing remains unchanged, so the resulting drawing is separate-simple. See Figure~\ref{fig:separate_simple_adjacent_cross} for a depiction.
\end{proof}

\section{The main result}\label{sec:main}

\begin{proof}[Proof of Theorem~\ref{thm:main1}]
Note that a union of disjoint edges is also a union of disjoint caterpillars. The first part of this theorem follows from combining Theorem~\ref{thm:main2} and Theorem~\ref{thm:main3}.

For the second part of this theorem, we describe a construction for $n$ vertices; see Figure~\ref{fig:flower}. Write $d = \lfloor n/2 \rfloor$ and choose real numbers $1 > r_1 > r_2 > \dots > r_d > 0$. Fix a circle with radius one centered at a point $o$. On this circle, we pick $n$ points with equal gaps, which will serve as vertices. There are exactly $d$ distinct distances between these vertices. We say a pair of vertices is $i$-distant if their distance is the $i$-th smallest among all distinct distances. For any $i$-distant pair $a$ and $b$, let $L$ be the line passing through $o$ and perpendicular to the line $Z$ determined by $a$ and $b$. There are two points on $L$ whose distance to $o$ equals~$r_i$. Between these two points, we choose the point $c$ whose distance to $Z$ is larger (breaking ties arbitrarily). The edge between $a$ and $b$ is drawn as the arc on the circle determined by $a,b$, and $c$ which passes through $c$. One can easily see that in this way, any two adjacent edges cross exactly once, and any two non-adjacent edges cross at least once and at most twice.
\end{proof}

\begin{proof}[Proof of Corollary~\ref{cor:main0}]
By abuse of notation, we shall call each $p_i$ a vertex and each $c_{ij}$ (or its image) an edge. We shall perform local modifications to the given configuration of vertices and edges, and obtain a drawing with mild assumptions that is either adjacent-simple or separate-simple. We make sure not to introduce any new pair of disjoint edges, so that Theorem~\ref{thm:main1} can be applied to conclude the proof.

Condition (A) means any two adjacent edges have exactly one intersection counted via pre-images. In particular, an edge $ab$ cannot pass through another vertex $c$ (otherwise $ac$ and $ab$ have two intersections), and an edge $ab$ cannot pass through $a$ or $b$ in its interior (otherwise another edge incident to $a$ or $b$ intersects $ab$ twice counted via pre-images). We now eliminate degeneracies without introducing disjointness or violating adjacent-simplicity.

\begin{figure}[ht]
    \centering
    \includegraphics[page=5]{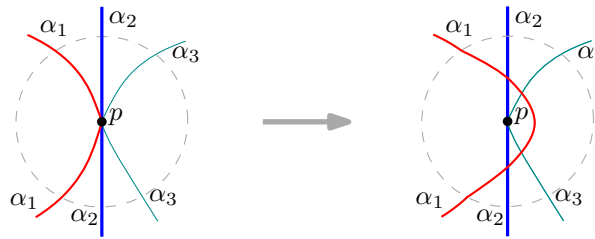}
    \caption{We do not perturb a segment (such as $\alpha_2$) sandwiched by others.}
    \label{fig:perturb_touch}
\end{figure}

\begin{figure}[ht]
    \centering
    \includegraphics[page=9]{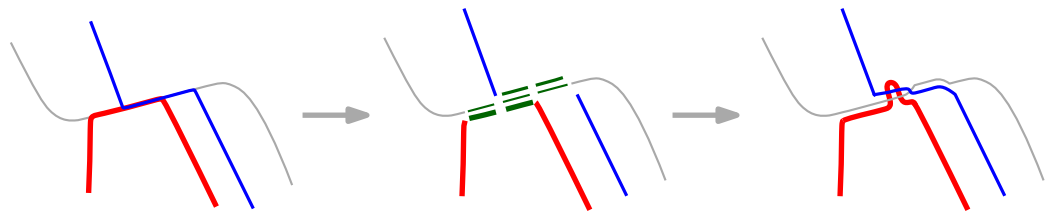}
    \caption{A strand is locally twisted to avoid introducing disjointness.}
    \label{fig:unpack}
\end{figure}

\begin{figure}[ht]
    \centering
    \includegraphics[page=8]{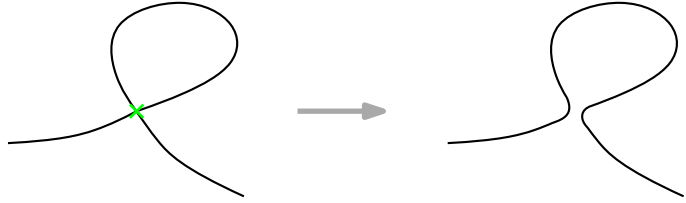}
    \caption{A standard local redraw to resolve self-crossings.}
    \label{fig:selfcross}
\end{figure}

\begin{figure}[ht]
    \centering
    \includegraphics[page=6]{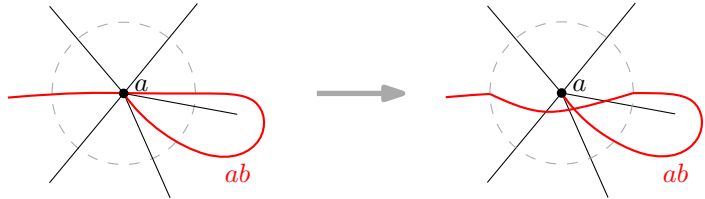}
    \caption{Any two adjacent edges intersect before and after the perturbation.}
    \label{fig:perturb_block}
\end{figure}

First, if there is a touching point $p$ (whether a self-touching or between distinct edges), we locally perturb the drawing near $p$ to decrease the total number of touchings (counted via pre-images). A segment $\alpha_2$ is said to be \textit{sandwiched} if there are segments $\alpha_1$ and $\alpha_3$ that touch it in $p$ from two sides. To avoid introducing disjointness, we always perturb a segment through $p$ that is not sandwiched there, see Figure~\ref{fig:perturb_touch}. The existence of such a segment is justified at the end of the proof. Since touchings only happen between non-adjacent edges under Condition (A), the new crossings introduced in this step will not violate adjacent-simplicity.

Next, we remove coincidences between edges along segments. For any maximal segment $\sigma$ along which edges coincide, its multiplicity is defined as the total number of pre-image sub-intervals mapping onto $\sigma$ among all the functions $c_{ij}$. We unpack $\sigma$ into that many duplicated parallel strands lying very close to each other. Since coincidences along segments do not happen between adjacent edges under condition (A), we may reconnect the resulting parallel strands arbitrarily (in a local manner) to form edges. Moreover, we can locally twist some strands to avoid introducing disjointness, see Figure~\ref{fig:unpack}.

Then, we resolve crossing points that are self-intersections by a standard redraw technique, see Figure~\ref{fig:selfcross}. Although our mild assumption does not forbid multiple segments crossing at the same point, we can assume such a scenario does not happen after slight perturbations, which makes it more intuitive to resolve self-crossings. At the end, we obtain an adjacent-simple complete graph drawing satisfying the mild assumptions as wanted.

Condition (S) means any two non-adjacent edges have at most one intersection counted via pre-images, and that no such intersection is a touching point. Under this condition, an edge $ab$ cannot pass through two other vertices $c$ and $d$, otherwise $ab$ and $cd$ have two intersections. In particular, an edge $ab$ cannot pass through two other vertices $c$ and $d$, since then $ab$ and $cd$ would meet at least twice (at $c$ and $d$). Therefore, for $n$ sufficiently large we may greedily select as many vertices as we wish so that, among the selected vertices, no induced edge contains other selected vertices in its interior. We restrict our considerations to these selected vertices. Moreover, we can use local perturbations to forbid any edge $ab$ passing through its endpoint $a$ in its interior, see Figure~\ref{fig:perturb_block}. Now we can remove touchings, unpack coincident segments, and resolve self-crossings as above. Since Condition (S) forbids non-adjacent edges to touch, the removal of touchings will not make two non-adjacent edges cross more than once. At the end, we obtain a separate-simple complete graph drawing satisfying the mild assumptions as wanted.

Finally, we verify the existence of non-sandwiched segments. Let $p$ be a touching point on a segment $\alpha$, and consider a sufficiently small closed disk $D$ centered at $p$ whose boundary circle meets the drawing transversely and contains no other touching or crossing points. The two endpoints of $\alpha\cap \partial D$ split $\partial D$ into two arcs; we fix one and call it the left arc of $\alpha$. For any segment $\beta$ that touches $\alpha$ at $p$, we say that $\beta$ touches $\alpha$ from the left if both endpoints of $\beta\cap \partial D$ lie on the left arc of $\alpha$. If no segment touches $\alpha$ from the left, then $\alpha$ is not sandwiched. Otherwise, we consider all segments touching $\alpha$ from the left. Such a segment $\beta$ determines its own left arc on $\partial D$ (namely, the arc of $\partial D$ cut out by $\beta\cap \partial D$ that lies entirely within the left arc of $\alpha$). We define a partial order by declaring $\gamma \succ \beta$ if the left arc of $\gamma$ is contained in the left arc of $\beta$. A maximal element of this partially ordered set cannot be sandwiched, hence such a non-sandwiched segment always exists.
\end{proof}

\section{Discussion}\label{sec:discussion}

It is natural to ask for a characterization of the plane structures guaranteed in simple complete graph drawings. To address this question, we can follow an approach similar to that of Theorem~\ref{thm:main2}. We call a simple complete graph drawing \textit{convex} if its vertices can be ordered as $v_1,v_2,\dots,v_n$ such that $v_i,v_j,v_k,v_\ell$ induce exactly the crossing $v_iv_k$ with $v_jv_\ell$ for all $i<j<k<\ell$, and we call it \textit{twisted} if its vertices can be ordered such that $v_i,v_j,v_k,v_\ell$ induce exactly the crossing $v_iv_\ell$ with $v_jv_k$ for all $i<j<k<\ell$. Pach, Solymosi, and T\'oth~\cite{pach2003unavoidable} have shown every sufficiently large simple complete graph drawing contains a sub-drawing on arbitrarily many vertices that is either convex or twisted. On the other hand, it is known both convex drawings and twisted drawings exist and can be arbitrarily large~\cite{harborth1992drawings}. Therefore, an abstract graph $G$ is a plane structure guaranteed in simple complete graph drawings if and only if $G$ is \textit{stack} and \textit{queue}. Here, $G$ being \textit{stack} (\textit{queue} respectively) means its vertices can be ordered such that $v_iv_k$ and $v_jv_\ell$ ($v_iv_\ell$ and $v_jv_k$ respectively) are not edges of $G$ simultaneously for all $i<j<k<\ell$. The names of these properties come from their relationships with the data structures stack and queue. We refer our readers to \cite{dujmovicW2004layouts,hlinveny2025stack,pupyrev2020layouts} and the references therein for information on related studies. It is known that stack graphs have nice characterizations~\cite{syslo79stack} and can be recognized in linear time, but the task of recognizing queue graphs is NP-complete~\cite{heathR92queues}. It seems to be an open problem whether graphs that are both stack and queue can be recognized efficiently, but that is beyond the scope of this paper.

\begin{figure}[ht]
    \centering
    \includegraphics{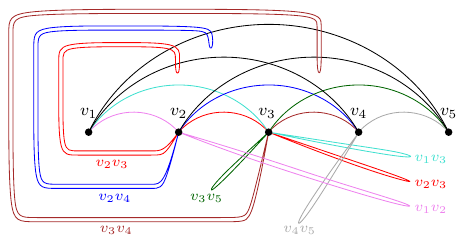}
    \caption{A loop labeled $v_iv_j$ is part of the edge between vertices $v_i$ and $v_j$.}
    \label{fig:selfblock}
\end{figure}

In Corollary~\ref{cor:main0}, we cannot further relax Condition (A) to allow adjacent edges to touch, nor relax Condition (S) to allow non-adjacent edges to touch. Otherwise, we pick vertices $p_1,p_2,\dots,p_n$ on a circle, and draw internally disjoint curves $h_{ij}$ ($j\neq i$) from $p_i$ to the center, then define edge $c_{ij}$ as the concatenation of $h_{ij}$ and $h_{ji}$. In such way, any two edges intersect exactly once in their interiors. If we choose the order for $h_{ij}$ to emanate from each vertex $p_i$ to be the same order as $p_i$ sees other vertices, then any two adjacent edges touch. If we choose the reverse order for each $p_i$, then any two adjacent edges cross; in fact, this is a degenerate case of the construction in the proof of Theorem~\ref{thm:main1} by choosing $r_1 = \dots = r_d = 0$. We cannot further relax the equation in Condition (A) to $|\text{Im}(c_{ij}) \cap \text{Im}(c_{k\ell})| = 1$, because we need to forbid edges from passing through their own endpoints. Otherwise, we can make thin loops emanating from vertices like open-ended half-edges and include them as edge parts, which gives us more flexibility to create crossings, see Figure~\ref{fig:selfblock} for a construction in which adjacent edges do not cross or touch but no two edges are disjoint. This construction can be extended to arbitrarily many vertices. In contrast, perhaps interestingly, edges can pass through endpoints in Condition (S) for Corollary~\ref{cor:main0}.

After all, Corollary~\ref{cor:main0} may not capture the exact cause of disjoint curves in the process of connecting points on paper. For example, we did not study the scenario that forbids non-adjacent curves to touch but allows adjacent curves to touch. In fact, we did not consider whether a rule like ``at least 64\% pairs of non-adjacent edges cross less than twice'' would force a pair of disjoint edges in complete graph drawings. We leave these obscure and interesting questions drifting in this river called time.

\nocite{aichholzer2024twisted_conference2022,bergold2024holes_conference,bergold2025plane_conference2024,fox2025structure_conference2024,pach2003unavoidable_conference2000,pach1994some_conference1993,pach2010disjoint_conference2003,sharir2013counting_conference2012,suk2013disjoint_conference2012,suk2025unavoidable_conference2022,felsner2020maximum_conference}
\bibliographystyle{plainurl}
\bibliography{main}

@article{hlinveny2025stack,
  title={Stack and queue numbers of graphs revisited},
  author={Hlin{\v{e}}n{\`y}, Petr and Straka, Adam},
  journal={European Journal of Combinatorics},
  volume={129},
  pages={104094},
  year={2025},
  publisher={Elsevier},
  doi={10.1016/j.ejc.2024.104094}
}

@article{harborth1992drawings,
  title={Drawings of the complete graph with maximum number of crossings},
  author={Harborth, Heiko and Mengersen, Ingrid},
  journal={Congressus Numerantium},
  pages={225--225},
  year={1992},
  publisher={UTILITAS MATHEMATICA PUBLISHING INC}
}

@article{hertrich2022coloring,
  title={Coloring Drawings of Graphs},
  author={Hertrich, Christoph and Schr{\"o}der, Felix and Steiner, Raphael},
  journal={The Electronic Journal of Combinatorics},
  volume={29},
  number={1},
  pages={\#P1.17},
  year={2022},
  doi = {10.37236/9808}
}

@book{diestel2025graph,
  title={Graph theory},
  author={Diestel, Reinhard},
  year={2025},
  publisher={Springer Nature}
}

@article{ramsey1930problem,
  title={On a Problem of Formal Logic},
  author={Ramsey, Frank P.},
  journal={Proceedings of the London Mathematical Society},
  volume={2},
  number={1},
  pages={264--286},
  year={1930},
  doi={10.1112/plms/s2-30.1.264},
}

@phdthesis{rafla1988,
  title={The good drawings of the complete graph},
  author={Rafla, Nabil H},
  year={1988},
  school={McGill University},
}

@article{pach2003unavoidable,
  title={Unavoidable configurations in complete topological graphs},
  author={Pach, J{\'a}nos and Solymosi, J{\'o}zsef and T{\'o}th, G{\'e}za},
  journal={Discrete \& Computational Geometry},
  volume={30},
  number={2},
  pages={311--320},
  year={2003},
  _publisher={Springer},
  doi={10.1007/S00454-003-0012-9},
  note={\textit{Conference Version:~\cite{pach2003unavoidable_conference2000}.}},
}

@inproceedings{pach2003unavoidable_conference2000,
  title={Unavoidable configurations in complete topological graphs},
  author={Pach, J{\'a}nos and T{\'o}th, G{\'e}za},
  booktitle={Proceedings of the 8th International Symposium on Graph Drawing ({GD} 2000)},
  pages={328--337},
  _publisher={Springer},
  year={2000},
  doi={10.1007/3-540-44541-2_31},
}

@article{aichholzer2024twisted,
  title={Twisted ways to find plane structures in simple drawings of complete graphs},
  author={Aichholzer, Oswin and Garc{\'\i}a, Alfredo and Tejel, Javier and Vogtenhuber, Birgit and Weinberger, Alexandra},
  journal={Discrete \& Computational Geometry},
  volume={71},
  number={1},
  pages={40--66},
  year={2024},
  _publisher={Springer},
  doi={10.1007/S00454-023-00610-0},
  note={\textit{Conference Version:~\cite{aichholzer2024twisted_conference2022}.}},
}

@inproceedings{aichholzer2024twisted_conference2022,
  author       = {Oswin Aichholzer and
                  Alfredo Garc{\'{\i}}a and
                  Javier Tejel and
                  Birgit Vogtenhuber and
                  Alexandra Weinberger},
  title        = {Twisted Ways to Find Plane Structures in Simple Drawings of Complete
                  Graphs},
  booktitle    = {Proceedings of the 38th International Symposium on Computational Geometry ({SoCG} 2022)},
  pages        = {5:1--5:18},
  _publisher    = {Schloss Dagstuhl - Leibniz-Zentrum f{\"{u}}r Informatik},
  year         = {2022},
  doi          = {10.4230/LIPICS.SOCG.2022.5},
}

@article{suk2025unavoidable,
  title={Unavoidable patterns in complete simple topological graphs},
  author={Suk, Andrew and Zeng, Ji},
  journal={Discrete \& Computational Geometry},
  volume={73},
  number={1},
  pages={79--91},
  year={2025},
  _publisher={Springer},
  doi={10.1007/S00454-024-00658-6},
  note={\textit{Conference Version:~\cite{suk2025unavoidable_conference2022}.}},
}

@inproceedings{suk2025unavoidable_conference2022,
  author       = {Andrew Suk and
                  Ji Zeng},
  title        = {Unavoidable Patterns in Complete Simple Topological Graphs},
  booktitle    = {Proceedings of the 30th International Symposium on Graph Drawing and Network Visualization ({GD} 2022)},
  pages        = {3--15},
  _publisher    = {Springer},
  year         = {2022},
  doi          = {10.1007/978-3-031-22203-0_1},
}

@article{fox2025structure,
  title={A structure theorem for pseudosegments and its applications},
  author={Fox, Jacob and Pach, J{\'a}nos and Suk, Andrew},
  journal={Journal of Combinatorial Theory, Series B},
  volume={174},
  pages={99--132},
  year={2025},
  _publisher={Elsevier},
  doi={10.1016/J.JCTB.2025.04.007},
  note={\textit{Conference Version:~\cite{fox2025structure_conference2024}.}}, 
}

@inproceedings{fox2025structure_conference2024,
  title={A Structure Theorem for Pseudo-Segments and Its Applications},
  author={Fox, Jacob and Pach, J{\'a}nos and Suk, Andrew},
  booktitle={Proceedings of the 40th International Symposium on Computational Geometry (SoCG 2024)},
  pages={59:1--59:14},
  year={2024},
  organization={Schloss Dagstuhl--Leibniz-Zentrum f{\"u}r Informatik},
  doi={10.4230/LIPICS.SOCG.2024.59},
}

@article{suk2013disjoint,
  author       = {Andrew Suk},
  title        = {Disjoint Edges in Complete Topological Graphs},
  journal      = {Discrete \& Computational Geometry},
  volume       = {49},
  number       = {2},
  pages        = {280--286},
  year         = {2013},
  doi          = {10.1007/S00454-012-9481-X},
  note={\textit{Conference Version:~\cite{suk2013disjoint_conference2012}.}},
}

@inproceedings{suk2013disjoint_conference2012,
  title={Disjoint edges in complete topological graphs},
  author={Suk, Andrew},
  booktitle={Proceedings of the twenty-eighth annual symposium on Computational geometry},
  pages={383--386},
  year={2012},
  doi={10.1145/2261250.2261308},
}

@article{garcia2021plane,
  title={{On plane subgraphs of complete topological drawings}},
  author={Garc{\'i}a, Alfredo and Pilz, Alexander and Tejel, Javier},
  journal={Ars Mathematica Contemporanea},
  volume={20},
  number={1},
  pages={69--87},
  year={2021},
  doi={10.26493/1855-3974.2226.E93},
}

@inproceedings{fulek2013topological,
  title={Topological graphs: empty triangles and disjoint matchings},
  author={Fulek, Radoslav and Ruiz-Vargas, Andres J},
  booktitle={Proceedings of the twenty-ninth annual symposium on Computational geometry},
  pages={259--266},
  year={2013},
  doi={10.1145/2462356.2462394},
}

@article{ruizvargas2017disjoint,
  author       = {Andres J. Ruiz{-}Vargas},
  title        = {Many disjoint edges in topological graphs},
  journal      = {Computational Geometry},
  volume       = {62},
  pages        = {1--13},
  year         = {2017},
  doi          = {10.1016/J.COMGEO.2016.11.003},
}

@article{fulek14disjoint,
  author       = {Radoslav Fulek},
  title        = {Estimating the Number of Disjoint Edges in Simple Topological Graphs
                  via Cylindrical Drawings},
  journal      = {{SIAM} Journal on Discrete Mathematics},
  volume       = {28},
  number       = {1},
  pages        = {116--121},
  year         = {2014},
  doi          = {10.1137/130925554},
}

@article{bergold2024holes,
  author       = {Helena Bergold and
                  Joachim Orthaber and
                  Manfred Scheucher and
                  Felix Schr{\"{o}}der},
  title        = {Holes in Convex and Simple Drawings},
  journal      = {Journal of Graph Algorithms and Applications},
  volume       = {29},
  number       = {3},
  pages        = {23--38},
  year         = {2025},
  doi          = {10.7155/JGAA.V29I3.2999},
  note={\textit{Conference Version:~\cite{bergold2024holes_conference}.}},
}

@inproceedings{bergold2024holes_conference,
  title={Holes in Convex and Simple Drawings},
  author={Bergold, Helena and Orthaber, Joachim and Scheucher, Manfred and Schr{\"o}der, Felix},
  booktitle={Proceedings of the 32nd International Symposium on Graph Drawing and Network Visualization (GD 2024)},
  pages={5:1--5:9},
  year={2024},
  organization={Schloss Dagstuhl--Leibniz-Zentrum f{\"u}r Informatik},
  doi={10.4230/LIPICS.GD.2024.5},
}

@inproceedings{aichholzer2022shooting,
  title={{Shooting stars in simple drawings of $K_{m,n}$}},
  author={Aichholzer, Oswin and Garc{\'\i}a, Alfredo and Parada, Irene and Vogtenhuber, Birgit and Weinberger, Alexandra},
  booktitle={Proceedings of the 30th International Symposium on Graph Drawing and Network Visualization ({GD} 2022)},
  pages={49--57},
  year={2022},
  organization={Springer},
  doi={10.1007/978-3-031-22203-0_5},
}

@article{zeng2025note,
  title={Note on disjoint faces in simple topological graphs},
  author={Zeng, Ji},
  journal={Journal of Combinatorial Theory, Series B},
  volume={171},
  pages={28--35},
  year={2025},
  _publisher={Elsevier},
  doi={10.1016/J.JCTB.2024.11.002},
}

@article{pach2010disjoint,
  title={Disjoint edges in topological graphs},
  author={Pach, J{\'a}nos and T{\'o}th, G{\'e}za},
  journal={Journal of Combinatorics},
  volume={1},
  pages={335--344},
  year={2010},
  doi={10.4310/JOC.2010.v1.n3.a4},
  note={\textit{Conference Version:~\cite{pach2010disjoint_conference2003}.}},
}

@inproceedings{pach2010disjoint_conference2003,
  author       = {J{\'{a}}nos Pach and
                  G{\'{e}}za T{\'{o}}th},
  title        = {Disjoint Edges in Topological Graphs},
  booktitle    = {Proceedings of the Indonesia-Japan Joint Conference on Combinatorial Geometry and Graph Theory ({IJCCGGT
                  2003})},
  pages        = {133--140},
  _publisher    = {Springer},
  year         = {2003},
  doi          = {10.1007/978-3-540-30540-8_15},
}

@article{pach1994some,
  author       = {J{\'{a}}nos Pach and
                  Jen{\H{o}} T{\H{o}}r{\H{o}}csik},
  title        = {Some Geometric Applications of {D}ilworth's Theorem},
  journal      = {Discrete \& Computational Geometry},
  volume       = {12},
  pages        = {1--7},
  year         = {1994},
  doi          = {10.1007/BF02574361},
  note={\textit{Conference Version:~\cite{pach1994some_conference1993}.}},
}

@inproceedings{pach1994some_conference1993,
  title={{Some geometric applications of Dilworth's theorem}},
  author={Pach, J{\'a}nos and T{\H{o}}r{\H{o}}csik, Jen{\H{o}}},
  booktitle={Proceedings of the ninth annual symposium on Computational geometry},
  pages={264--269},
  year={1993},
  doi={10.1007/BF02574361},
}

@article{toth2000note,
  title={Note on geometric graphs},
  author={T{\'o}th, G{\'e}za},
  journal={Journal of Combinatorial Theory, Series A},
  volume={89},
  number={1},
  pages={126--132},
  year={2000},
  doi={10.1006/JCTA.1999.3001},
}

@article{sharir2013counting,
  author       = {Micha Sharir and
                  Adam Sheffer and
                  Emo Welzl},
  title        = {Counting plane graphs: Perfect matchings, spanning cycles, and {K}asteleyn's
                  technique},
  journal      = {Journal of Combinatorial Theory, Series A},
  volume       = {120},
  number       = {4},
  pages        = {777--794},
  year         = {2013},
  doi          = {10.1016/J.JCTA.2013.01.002},
  note={\textit{Conference Version:~\cite{sharir2013counting_conference2012}.}},
}

@inproceedings{sharir2013counting_conference2012,
  title={{Counting plane graphs: perfect matchings, spanning cycles, and Kasteleyn's technique}},
  author={Sharir, Micha and Sheffer, Adam and Welzl, Emo},
  booktitle={Proceedings of the twenty-eighth annual symposium on Computational geometry},
  pages={189--198},
  year={2012},
  doi={10.1145/2261250.2261277},
}

@inproceedings{aichholzer2024separable,
  title={Separable Drawings: Extendability and Crossing-Free {H}amiltonian Cycles},
  author={Aichholzer, Oswin and Orthaber, Joachim and Vogtenhuber, Birgit},
  booktitle={Prooceedings of the 32nd International Symposium on Graph Drawing and Network Visualization (GD 2024)},
  pages={34:1--34:17},
  year={2024},
  organization={Schloss Dagstuhl--Leibniz-Zentrum f{\"u}r Informatik},
  doi={10.4230/LIPICS.GD.2024.34},
}

@article{bergold2025plane,
  title={Plane {H}amiltonian cycles in convex drawings},
  author       = {Helena Bergold and
                  Stefan Felsner and
                  Meghana M. Reddy and
                  Joachim Orthaber and
                  Manfred Scheucher},
  journal={Discrete \& Computational Geometry},
  year={2025},
  _publisher={Springer},
  doi={https://doi.org/10.1007/s00454-025-00752-3},
  note={\textit{Conference Version:~\cite{bergold2025plane_conference2024}.}},
}

@inproceedings{bergold2025plane_conference2024,
  author       = {Helena Bergold and
                  Stefan Felsner and
                  Meghana M. Reddy and
                  Joachim Orthaber and
                  Manfred Scheucher},
  title        = {Plane {H}amiltonian Cycles in Convex Drawings},
  booktitle    = {Proceedings of the 40th International Symposium on Computational Geometry (SoCG 2024)},
  pages        = {18:1--18:16},
  _publisher    = {Schloss Dagstuhl - Leibniz-Zentrum f{\"{u}}r Informatik},
  year         = {2024},
  doi          = {10.4230/LIPICS.SOCG.2024.18},
}

@inproceedings{felsner2020maximum_conference,
  author       = {Stefan Felsner and
                  Michael Hoffmann and
                  Kristin Knorr and
                  Irene Parada},
  title        = {On the Maximum Number of Crossings in Star-Simple Drawings of {$K_n$}
                  with No Empty Lens},
  booktitle    = {Proceedings of 28th International Symposium on Graph Drawing and Network Visualization (GD 2020)},
  pages        = {382--389},
  _publisher    = {Springer},
  year         = {2020},
  doi          = {10.1007/978-3-030-68766-3_30},
}

@article{felsner2020maximum,
  author       = {Stefan Felsner and
                  Michael Hoffmann and
                  Kristin Knorr and
                  Jan Kyn\v{c}l and
                  Irene Parada},
  title        = {On the Maximum Number of Crossings in Star-Simple Drawings of {$K_n$}
                  with No Empty Lens},
  journal      = {Journal of Graph Algorithms and Applications},
  volume       = {26},
  number       = {3},
  pages        = {381--399},
  year         = {2022},
  doi          = {10.7155/jgaa.00600},
  note={\textit{Conference Version:~\cite{felsner2020maximum_conference}.}},
}

@article{cairns2019bad,
  author       = {Grant Cairns and
                  Emily Groves and
                  Yuri Nikolayevsky},
  title        = {Bad drawings of small complete graphs},
  journal      = {Australasian Journal of Combinatorics},
  volume       = {75},
  pages        = {322--342},
  year         = {2019}
}

@article{balko2015semi,
  author       = {Martin Balko and
                  Radoslav Fulek and
                  Jan Kyn\v{c}l},
  title        = {Crossing Numbers and Combinatorial Characterization of Monotone Drawings
                  of {$K_n$}},
  journal      = {Discrete \& Computational Geometry},
  volume       = {53},
  number       = {1},
  pages        = {107--143},
  year         = {2015},
  doi          = {10.1007/S00454-014-9644-Z},
}

@article{heathR92queues,
  author       = {Lenwood S. Heath and
                  Arnold L. Rosenberg},
  title        = {Laying out Graphs Using Queues},
  journal      = {{SIAM} Journal on Computing},
  volume       = {21},
  number       = {5},
  pages        = {927--958},
  year         = {1992},
  doi          = {10.1137/0221055},
}

@article{syslo79stack,
  author       = {Maciej M. Syslo},
  title        = {Characterizations of outerplanar graphs},
  journal      = {Discrete Mathematics},
  volume       = {26},
  number       = {1},
  pages        = {47--53},
  year         = {1979},
  doi          = {10.1016/0012-365X(79)90060-8},
}

@article{dujmovicW2004layouts,
  author       = {Vida Dujmovic and
                  David R. Wood},
  title        = {On Linear Layouts of Graphs},
  journal      = {Discret. Math. Theor. Comput. Sci.},
  volume       = {6},
  number       = {2},
  pages        = {339--358},
  year         = {2004},
  doi          = {10.46298/DMTCS.317},
}

@misc{pupyrev2020layouts,
  author       = {Pupyrev, Sergey},
  title        = {A collection of existing results on stack and queue numbers},
  howpublished = {\url{https://spupyrev.github.io/linearlayouts.html}},
  year         = {2020},
  note         = {Accessed: 2/19/2026}
}

\end{document}